\theoremstyle{remark}
\newtheorem{rem}{\protect\remarkname}
\theoremstyle{definition}
\newtheorem{defn}{\protect\definitionname}
\theoremstyle{plain}
\newtheorem{prop}{\protect\propositionname}
\theoremstyle{plain}
\newtheorem{cor}{\protect\corollaryname}
\theoremstyle{plain}
\newtheorem{thm}{\protect\theoremname}
\theoremstyle{definition}
 \newtheorem{example}{\protect\examplename}
\theoremstyle{plain}
\newtheorem{lem}{\protect\lemmaname}
\definecolor{brown}{RGB}{0, 0, 0}
\providecommand{\corollaryname}{Corollary}
\providecommand{\definitionname}{Definition}
\providecommand{\examplename}{Example}
\providecommand{\lemmaname}{Lemma}
\providecommand{\propositionname}{Proposition}
\providecommand{\remarkname}{Remark}
\providecommand{\theoremname}{Theorem}
\begin{document}
\title{Social Welfare in Search Games with Asymmetric Information}
\author{Gilad Bavly, Yuval Heller, Amnon Schreiber\thanks{Department of Economics, Bar-Ilan University, Israel. Email addresses:
\protect\href{mailto:gilad.bavly@gmail.com}{gilad.bavly@gmail.com},
\protect\href{mailto:yuval.heller@biu.ac.il}{yuval.heller@biu.ac.il},
\protect\href{mailto:amnon.schreiber@biu.ac.il}{amnon.schreiber@biu.ac.il}.
This manuscript replaces an obsolete working paper titled ``The social
payoff in differentiation games.'' We thank Sergiu Hart, Igor Letina,
Igal Milchtaich, Abraham Neyman, Ron Peretz, Alon Raviv, Dov Samet,
and Eilon Solan, the editor Tilman B$\ddot{\textrm{o}}$rgers and
two anonymous referees for various helpful comments. Bavly and Heller
are grateful to the European Research Council for its financial support
(ERC starting grant \#677057). Bavly acknowledges support from the
Department of Economics and the Sir Isaac Wolfson Chair at Bar-Ilan
University, and ISF grant 1626/18 and 2566/20. Schreiber acknowledges
support from ISF grant 2897/20. }}
\maketitle
\begin{abstract}
\noindent We consider games in which players search for a hidden prize,
and they have asymmetric information about the prize's location. We
study the social payoff in equilibria of these games. We present sufficient
conditions for the existence of an equilibrium that yields the first-best
payoff (i.e., the highest social payoff under any strategy profile),
and we characterize the first-best payoff. The results have interesting
implications for innovation contests and R\&D races.\\
Keywords: incomplete information, search duplication, decentralized
research, social welfare. JEL Codes: C72, D82, D83.\\
Final preprint of a manuscript accepted for publication in \emph{Journal
of Economic Theory}.
\end{abstract}

\section{Introduction }

We study interactions in which each agent has private information
about the values of choosing different alternatives, and the payoff
that each agent gains from each alternative is decreasing in the number
of opponents who also choose it. The environment is competitive in
the sense that the agents work separately, have different goals, and
do not share their information. Yet, the agents also share a common
coordination incentive: they may all gain from dividing the different
alternatives between them, thereby avoiding, as much as possible,
cases of miscoordination, i.e., cases of multiple agents choosing
the same alternative, and reducing their own payoffs due to this inefficiency.

An important class of these types of interactions is search games,
in which players search for a prize hidden in one of a finite set
of locations.\footnote{The assumption of having a single prize is common in the literature;
see, e.g., \citet{fershtman1997simple,konrad2014search,liu2019strategic}.
We leave the important question of how to extend the results to multiple
prizes for future research.} Each player $i$ is able to search in at most $K_{i}$ locations
(all at once). Searching incurs a private cost, which is a convex
\textcolor{brown}{increasing }function of the number of locations
in which the player searches. Each player receives some private coarse
signal about the actual location of the prize, and chooses which locations
to search. Specifically, for each player there is a collection of
disjoint subsets of locations (namely, a partition), such that her
private signal informs the player in which of these subsets the prize
resides. 

Both the discoverer and society gain from the discovery of the prize.
We allow the prize's value to depend on the location. Also, the value
for society and the individual values for players may all be different.
When multiple players search in the same location (``search duplication''),
it reduces the reward that each player will receive in case the prize
is indeed there. By contrast, the social value of the prize is unaffected
by the number of finders.

Search games have been applied in various setups such as R\&D races
in oligopolistic markets (e.g., \citealp{loury1979market,chatterjee2004rivals,akcigit2015role,letina2016road}),
design of innovation contests (e.g., \citealp{erat2012managing,bryan2017direction,letina2019inducing}),
and scientific research (e.g., \citealp{kleinberg2011mechanisms}).
Most of the existing literature does not allow for private information.\footnote{\label{fn:Chen-et-al}We are aware of one related existing model of
a search game with asymmetric information, that of \citet{chen2015fair}.
The key difference between our model and theirs is that \citeauthor{chen2015fair}
rely on enforceable mechanisms, which allow players to safely share
their asymmetric information, as all players must follow a contract
once it has been signed. By contrast, we consider a setup in which
players cannot rely on enforceable mechanisms, and, thus, they are
limited to playing Nash equilibria.} The main methodological innovation of the present model is the introduction
of asymmetric information into search games.

Importantly, we study a one-shot game (i.e., if the prize is not found,
players do not get to search again) with simultaneous actions. This
assumption, which differs from the dynamic models studied in many
of the papers cited above, may be reasonable in situations in which
there is severe urgency to make the discovery (see Section \ref{sec:Discussion}
for further discussion, and Section \ref{subsec:Necessity-of-All}
for \textcolor{brown}{an example} of what happens when this assumption
is relaxed). \textcolor{brown}{One recent real-life example in which
urgency might make the interaction essentially one-shot (and which,
roughly, fits the other assumptions of our model) is the problem faced
in 2020 of quickly developing a vaccine for COVID-19, where different
pharmaceutical R\&D divisions had heterogeneous private information
about the most promising route to achieve this. }

The expected social gain (from a successful discovery) in search games
is clearly constrained by the information structure, as we assume
that players are competitive and do not share their private information.
The social gain may also be constrained by the fact that players'
individual preferences can differ from society's, and players have
strategic considerations as well. Thus, the main question we study
is: what is the highest social payoff in equilibrium? 

Our first main result  states that there exists a (pure) equilibrium
that yields the first-best social payoff (namely, the highest social
payoff that any strategy profile can yield) if the following two conditions
hold for any two locations $\omega$ and $\omega'$ that a player
considers possible (after observing her own private signal): (1) ordinal
consistency: the player and society have the same ordinal ranking
between searching (by herself) in $\omega$ and in $\omega'$, and
(2) solitary-search dominance: the player always prefers searching
$\omega$ by herself to searching $\omega'$ with other players, or
to not searching at all. 

It is relatively easy to see that neither condition can be dropped
(see the examples presented in Section \ref{subsec:Necessity-of-All}),
and that the conditions are sufficient in a simple setup without asymmetric
information. Our result shows that, perhaps surprisingly, these two
conditions are sufficient in the richer setup with asymmetric information
as well. The intuition is that no player has an incentive to ``spoil''
society's payoff by moving from a socially better location to a worse
one, nor by moving from a location that she searches alone to a location
that others search. We discuss the implications of this result on
the design of innovation contests in Section \ref{subsec:Insights-for-Innovation}.

Our second main result presents lower bounds for the first-best social
payoff, which we demonstrate to be binding in various cases. The derivation
of these bounds relies on representing a search game as a bipartite
graph and adapting and extending classic results from graph theory,
the max-flow min-cut theorem (\citealp{ford1956maximal}) and the
Birkhoff--von Neumann theorem (\citealp{birkhoff1946tres,von1953certain}),
to our setup.\footnote{Recent economic applications (and extensions) of these graph-theory
results have appeared in matching mechanisms (e.g., \citealp{budish2013designing,bronfman2018redesigning}),
large anonymous games (e.g., \citealp{blonski2005women}), public
good games with multiple resources (e.g., \citealp{tierney2019problem}),
and auctions of multiple discrete items (e.g., \citealp{ben2017walrasian}).} 

\paragraph{Structure}

Section \ref{sec:Model} presents our model. We study the existence
of an equilibrium with a first-best social payoff in Section \ref{sec:Socially-Optimal-Equilibrium}.
In Section \ref{sec:socially-optimal-payoff} we present lower bounds
for the first-best payoff. Section \ref{subsec:Cost-Inclusive-Social-Payoff}
considers a variant of our model, in which society internalizes the
players' costs. We conclude and discuss the relations with the literature
in Section \ref{sec:Discussion}. \textcolor{brown}{The appendix}
presents the formal proofs. 

\section{Model\label{sec:Model}}

\paragraph{Setup}

Let $N=\left\{ 1,2,...,n\right\} $ be a finite set of players. A
typical player is denoted by $i$. We use $-i$ to denote the set
of all players except player $i$. We describe the private information
of the players in terms of knowledge partitions (\citealp{aumann1976agreeing}).
Let $\Omega$ be the set of the states of the world (henceforth, states).
Nature chooses one state $\omega\in\Omega$ that is the true state
of the world. Each player $i$ is endowed with $\Pi_{i}$, which is
a partition of $\Omega$, namely, a list of disjoint subsets of $\Omega$
whose union is the whole $\Omega$. We refer to the elements of player
$i$'s partition (i.e., the subsets) as player $i$'s cells. For each
state $\omega$, let $\pi_{i}\left(\omega\right)$ denote the cell
of player $i$ that contains the state $\omega$. If the true state
is $\omega,$ then player $i$ knows that the true state is one of
the states in $\pi_{i}\left(\omega\right)$. 

\global\long\def\sOne{s^{{\scriptscriptstyle 1}}}%
\global\long\def\sKay{s^{{\scriptscriptstyle K}}}%
\global\long\def\sSubOne{s_{{\scriptscriptstyle 1}}}%
\global\long\def\sTee{s^{{\scriptscriptstyle T}}}%
Note that the knowledge partitions framework is equivalent to a model
in which each player observes a private random signal. Each cell of
player $i$'s partition corresponds to a different realization of
her private signal. W.l.o.g., one may view the partition $\Pi_{i}$
as the set of possible realizations of player $i$'s private signal
itself; i.e., each cell in $\Pi_{i}$ is a possible signal, and if
the state of the world is $\omega$ then player $i$ observes the
signal $\pi_{i}\left(\omega\right)$.

The players search for a prize hidden in one of a finite set of possible
locations. Importantly, we assume that the location of the prize determines
the private signal of each player (in other words, the signals that
players observe are a deterministic function of the prize's location).
This implies that w.l.o.g. each state of the world in our model corresponds
to a different location of the prize. Hence, we identify the finite
set of locations with the set of states $\Omega$. When a player searches
in location (i.e., state) $\omega\in\Omega$, she finds the prize
if the location of the prize is $\omega$ (i.e., if the true state
of the world is $\omega$). Figure \ref{fig:Illustration-of-Information}
demonstrates an information structure in a two-player search game.
\begin{figure}[h]
\centering{}\includegraphics[scale=0.43]{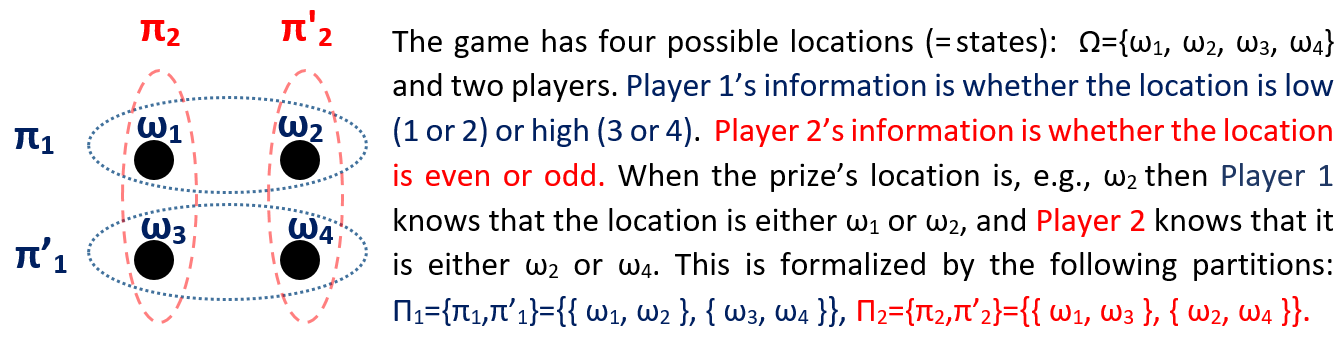}\caption{\label{fig:Illustration-of-Information}Illustration of information
structure of a two-player search game}
\end{figure}

We say that player $i$ receives no information at all if her information
partition $\Pi_{i}$ is trivial, i.e., $\Pi_{i}=\left\{ \Omega\right\} $
contains a single element, which is the whole $\Omega$. A setting
that does not allow for asymmetric information corresponds to the
degenerate case in our model where all players have trivial partitions.

Let $\mu\in\Delta\left(\Omega\right)$ denote the (common) prior belief
about the prize's location, where $\Delta\left(\Omega\right)$ denotes
the set of distributions over $\Omega$. For a subset of locations
$E\subseteq\Omega$, let $\mu\left(E\right)=\sum_{\omega\in E}\mu\left(\omega\right)$
denote the prior probability of $E$. For non-triviality, we assume
that every cell has a positive prior probability, i.e., $\mu\left(\pi_{i}\right)>0$
for every $\pi_{i}\in\Pi_{i}$ and $i\in N$. When the (unknown) location
of the prize is $\omega$, each player $i$ assigns a posterior belief
of $\mu\left(\omega'|\pi_{i}\left(\omega\right)\right)$ to the location
being $\omega'$, where 
\[
\mu\left(\omega'|\pi_{i}\left(\omega\right)\right)=\begin{cases}
\nicefrac{\mu\left(\omega'\right)}{\mu\left(\pi_{i}\left(\omega\right)\right)} & \omega'\in\pi_{i}\left(\omega\right)\\
0 & \omega'\not\in\pi_{i}\left(\omega\right).
\end{cases}
\]

We allow heterogeneity in the maximal number of locations that each
player can search. Specifically, each player $i$ chooses up to $K_{i}\in\mathbb{N}$
locations in which she searches, where $K_{i}$ is the player's search
capacity. A (pure) strategy of player $i$ is a function $s_{i}$
that assigns to each cell $\pi_{i}\in\Pi_{i}$ a subset of $\pi_{i}$
with at most $K_{i}$ elements. We interpret $s_{i}\left(\pi_{i}\right)$
as the set of up to $K_{i}$ locations in which player $i$ searches
when she observes the signal $\pi_{i}$. If no ambiguity can arise,
we may also say that player $i$ (ex-ante) searches in location $\omega$,
if $\omega\in s_{i}\left(\pi_{i}\left(\omega\right)\right)$, i.e.,
if player $i$ searches in $\omega$ when the prize is located in\footnote{Equivalently, the locations in which player $i$ (ex-ante) searches
are $\cup_{\pi_{i}\in\Pi_{i}}\,s_{i}\left(\pi_{i}\right)$, namely,
the union (across all her cells) of the locations she searches within
each cell when that cell happens to be her signal.} $\omega$.

We focus in the present paper on pure strategies. Let $S_{i}\equiv S_{i}\left(G\right)$
denote the set of all (pure) strategies of player $i$, and let $S\equiv S\left(G\right)=\prod_{i\in N}S_{i}$
\,be the set of strategy profiles in the game $G$. For example,
in Figure \ref{fig:Illustration-of-Information} Player $1$, with
a capacity of one, has four pure strategies. One such strategy, denoted
by $\sSubOne$, is given by $\left(\sSubOne\left(\pi_{1}\right)=\omega_{2};\,\sSubOne\left(\pi'_{1}\right)=\omega_{3}\right)$;
i.e., a player following $\sSubOne$ searches in location $\omega_{2}$
upon observing signal $\pi_{1}$ and searches in $\omega_{3}$ upon
observing $\pi'_{1}$. Suppose that Player $1$ follows $\sSubOne$
and the location of the prize is $\omega_{4}$. Then she will observe
the signal $\pi'_{1}$ and search in $\omega_{3}$ (and hence she
will not find the prize).
\begin{rem}
All of our results hold in a more general setup with either of the
following extensions (with minor modifications to the proofs):
\begin{enumerate}
\item Heterogeneous priors: each player $i$ has a different prior $\mu_{i}$. 
\item Heterogeneous restricted locations: each player $i$ is allowed to
search only in a subset $\Omega_{i}\subseteq\Omega$ of the locations.
\end{enumerate}
\end{rem}

\paragraph{Costs, Rewards, and Duplication}

Searching incurs a private cost, which is a convex increasing function
of the number of locations in which a player searches.\footnote{Extending the costs to depend also on which locations, not just how
many, are being searched may be an interesting direction for future
research.} Specifically, each player $i$ bears a cost $c_{i}\left(k\right)\geq0$
when searching within $k$ locations, where $c_{i}\left(0\right)=0$
and $c_{i}\left(k+1\right)-c_{i}\left(k\right)\geq c_{i}\left(k\right)-c_{i}\left(k-1\right)$
for any $k\in\left\{ 1,..,K_{i}-1\right\} $. We say that a game has
a costless search (up to the capacity constraints) if $c_{i}\equiv0$
(i.e., if $c_{i}\left(k\right)=0$ for every $k\in\left\{ 1,..,K_{i}\right\} $
and every player $i$). 

For any location $\omega$, let $v_{i}^{m}\left(\omega\right)\in\mathbb{R}^{+}$
denote the reward for player $i$ when $m$ players, including player
$i$, find the prize in $\omega$. The reward for finding the prize
alone, $v_{i}^{1}\left(\omega\right)$, is also called the private
value of player $i$ (at location $\omega$). We assume that the finder's
reward is weakly decreasing in the number of joint finders (i.e.,
$v_{i}^{m+1}\left(\omega\right)\leq v_{i}^{m}\left(\omega\right)$
for any $m$ and $\omega$), which reflects the negative impact of
search duplication. An example of such decreasing rewards is $v_{i}^{m}\left(\omega\right)=\frac{1}{m}\cdot v_{i}^{1}\left(\omega\right)$,
which may correspond to a setup in which one of the players who search
in the prize\textquoteright s location is randomly chosen to be its
undisputed owner, and she gains the prize\textquoteright s full value
(see, e.g., \citealp{fershtman1997simple}). 

In addition to the players, we introduce an external entity, society,
who is not one of the players and is indifferent to the identity of
the prize finder, as long as the prize is found. In our normative
analysis we set the objective of maximizing society's payoff. One
can think of society as representing a government who cares for the
welfare of those in society (e.g., consumers or patients) who will
be affected by the discovery. For any location $\omega$, let $v_{\mathfrak{s}}\left(\omega\right)\in\mathbb{R}^{+}$
denote the prize's social value for society when the prize is found
in $\omega$. Note that the social value does not depend on the identity
or the number of the prize's finders. In particular, the social value
of the prize is not reduced when there are multiple finders, which
seems plausible in various setups. For example, it seems plausible
that price competition between competing pharmaceutical firms will
not harm society (it might even benefit the consumers), and that the
social gain from a new discovery is not likely to be reduced when
two scientists fight over the credit.

Our main results assume that society disregards the players' search
costs. This assumption seems reasonable in setups where the potential
social impact of a discovery overshadows (in society's eyes) the player's
individual gains and costs, as in the motivating example of finding
a vaccine. In other setups this assumption might be less appropriate,
and we extend our result to a setup in which society internalizes
the players' search costs in Section \ref{subsec:Cost-Inclusive-Social-Payoff}.

We say that the game has common values if $v_{i}^{1}\left(\omega\right)=v_{j}^{1}\left(\omega\right)=v_{\mathfrak{s}}\left(\omega\right)$
for every two players $i,j\in N$ and every location $\omega\in\Omega$.
Summarizing all the above components allows us to define a search
game as a tuple $G=\left(N,\Omega,\Pi,\mu,K,c,v\right)$. 

\paragraph{Private Payoffs and Equilibrium}

Fix a strategy profile $s\in S$. Let $m_{s}\left(\omega\right)$
denote the number of players who search in $\omega$ when the prize's
location is $\omega$, i.e., 
\[
m_{s}\left(\omega\right)=\sum_{i\in N}\boldsymbol{1}_{\omega\in s_{i}\left(\pi_{i}\left(\omega\right)\right)}.
\]

The reward (resp., cost) of player $i$ conditional on the prize's
location being $\omega$ is equal to $1_{\omega\in s_{i}\left(\pi_{i}\left(\omega\right)\right)\,}v_{i}^{m_{s}\left(\omega\right)}\left(\omega\right)$
(resp., $c_{i}\left(\left|s_{i}\left(\pi_{i}\left(\omega\right)\right)\right|\right)$).
Thus, the (net) payoff of player $i$ conditional on the prize's location
being $\omega$, denoted by $u_{i}\left(s|\omega\right)$, is 
\[
u_{i}\left(s|\omega\right)=1_{\omega\in s_{i}\left(\pi_{i}\left(\omega\right)\right)\,}v_{i}^{m_{s}\left(\omega\right)}\left(\omega\right)-c_{i}\left(\left|s_{i}\left(\pi_{i}\left(\omega\right)\right)\right|\right).
\]
The players and society are both risk neutral with respect to their
payoffs. The (ex-ante) expected (net) payoff of player $i$ is given
by $u_{i}\left(s\right)=\sum_{\omega\in\Omega}\,\mu\left(\omega\right)\cdot u_{i}\left(s|\omega\right).$

A strategy profile $s=\left(\sSubOne,...,s_{n}\right)$ is a (Bayesian)
Nash equilibrium of search game $G$ if no player can gain by unilaterally
deviating from the equilibrium; i.e., if for every player $i$ and
every strategy $s'_{i}$ the following inequality holds: $u_{i}\left(s\right)\geq u_{i}\left(s'_{i},s_{-i}\right),$
where $s_{-i}$ describes the strategy profile played by all players
except player $i$.

\paragraph{Social Payoff}

Fix a strategy profile $s\in S$. Let $U\left(s|\omega\right)=v_{\mathfrak{s}}\left(\omega\right)\cdot\boldsymbol{1}_{m_{s}\left(\omega\right)\geq1}$
denote the social payoff, conditional on the prize's location being
$\omega$. The expected social payoff is equal to $U\left(s\right)=\sum_{\omega\in\Omega}\,\mu\left(\omega\right)\cdot U\left(s|\omega\right)$.
Let $U_{\textrm{opt}}$ denote the socially optimal payoff (or the
first-best payoff): $U_{\textrm{opt}}=\max_{s\in S}U\left(s\right).$
A strategy profile $s$ is socially optimal if it achieves the socially
optimal payoff, i.e., if $U\left(s\right)=U_{\textrm{opt}}$. 

A strategy profile is location-maximizing if it maximizes the number
of locations in which the prize is found; i.e., if for any strategy
profile $s'\in S$, 
\[
\sum_{\omega\in\Omega}\boldsymbol{1}_{\left\{ m_{s}\left(\omega\right)\geq1\right\} }\geq\sum_{\omega\in\Omega}\boldsymbol{1}_{\left\{ m_{s'}\left(\omega\right)\geq1\right\} .}
\]
The set of socially optimal strategy profiles is typically different
from the set of location-maximizing strategy profiles. The two notions
coincide if society assigns the same expected value to every location,
i.e., if $\mu\left(\omega\right)v_{\mathfrak{s}}\left(\omega\right)=\mu\left(\omega'\right)v_{\mathfrak{s}}\left(\omega'\right)$
for any two locations $\omega,\omega'\in\Omega$. A strategy profile
is exhaustive if the prize is always found, i.e., if $m_{s}\left(\omega\right)\geq1$
for every $\omega\in\Omega$. It is immediate that an exhaustive strategy
profile is both socially optimal and location-maximizing.

\section{Socially Optimal Equilibrium\label{sec:Socially-Optimal-Equilibrium}}

In this section we present conditions under which the strategic constraints
(namely, each player maximizing her private payoff) do not limit the
social payoff; that is, we give sufficient conditions for the existence
of socially optimal equilibria. 

\subsection{Search Games are Weakly Acyclic}

A sequence of strategy profiles is an improvement path (\citealp{monderer1996potential})
if each strategy profile differs from its preceding profile by the
strategy of a single player, who obtained a lower payoff in the preceding
profile.
\begin{defn}
\label{def:unliateral-improvements}A sequence of strategy profiles
$\left(\sOne,...,\sTee\right)$ is an improvement path if for every
$t\in\left\{ 1,...,T-1\right\} $ there exists a player $i_{t}\in N$
such that: (1) $s_{j}^{t}=s_{j}^{t+1}$ for every player $j\neq i_{t}$,
and (2) $u_{i_{t}}\left(s^{t+1}\right)>u_{i_{t}}\left(s^{t}\right)$.
\end{defn}
We begin by presenting an auxiliary result, which states that search
games are weakly acyclic: starting from any strategy profile, there
exists an improvement path that ends in a Nash equilibrium.\footnote{The proof introduces an agent-normal form representation of our game
(in the spirit of \citealp{selten1975reexamination}), which is similar
to matroid congestion games with player-specific payoffs. \citet[Theorem 8]{ackermann2009pure}
show that these games are weakly acyclic. Their result cannot be directly
applied to our setup, as there are some technical differences; most
notably, our cost function being non-linear (while in \citeauthor{ackermann2009pure}'s
setup the cost of searching in two locations must be the sum of the
costs in each location). Nevertheless, the proofs turn out to be similar.}
\begin{defn}[\citealp{milchtaich1996congestion}]
A game is weakly acyclic if for any $\sOne\in S$, there exists an
improvement path $\left(\sOne,...,\sTee\right)$, such that $\sTee$
is a (pure) Nash equilibrium.
\end{defn}
\begin{prop}
\label{pro:sequence-unilateral-improvements}Any search game is weakly
acyclic.
\end{prop}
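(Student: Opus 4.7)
The plan is to reduce the statement to the agent-normal form of the game (in the spirit of Selten 1975) and then adapt the lexicographic improvement-path argument used by Ackermann, R\"oglin, and V\"ocking for matroid congestion games with player-specific payoffs. For the reduction, observe that because the cells of $\Pi_i$ are disjoint events, the expected payoff decomposes as $u_i(s) = \sum_{\pi_i \in \Pi_i} \mu(\pi_i)\,\mathbb{E}\!\left[u_i(s\mid\omega)\mid \omega\in\pi_i\right]$, and the summand indexed by $\pi_i$ depends on $s_i$ only through $s_i(\pi_i)$. Hence each pair $(i,\pi_i)$ can be treated as an independent agent whose strategy is a subset of $\pi_i$ of cardinality at most $K_i$; Nash equilibria of this agent-normal game coincide with those of $G$, and any improvement path in the agent-normal game lifts to one in $G$ (after grouping together the moves of the agents belonging to a single player). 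It therefore suffices to show that the agent-normal game is weakly acyclic.

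Starting from any profile $s^1$, I would construct an improvement path by iteratively picking some non-best-responding agent $(i,\pi_i)$ and letting her perform a single-location move --- adding a location, dropping a location, or swapping one location for another --- that strictly improves her payoff. The matroid-exchange property of the truncated uniform matroid of rank $K_i$ on $\pi_i$ implies that whenever an agent has \emph{any} profitable deviation she also has a profitable single-location one, so moving one element at a time is without loss of generality. Termination would then be established via a lexicographic potential on the non-increasingly sorted load vector $\bigl(m_s(\omega)\bigr)_{\omega\in\Omega}$, augmented with a secondary tie-breaking statistic; following the template of Ackermann et al.'s Theorem~8, I would argue that one can always pick the single-location move so that this vector strictly lex-decreases, and since only finitely many such vectors exist, the process must end at a Nash equilibrium.

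The main obstacle, flagged in the footnote, is that our cost $c_i(k)$ is a convex function of cardinality rather than a sum of location-specific costs, so Ackermann et al.'s setup does not apply verbatim. I would handle this by re-interpreting each agent's strategy as an assignment of her $K_i$ slots to distinct locations in $\pi_i$ (with empty slots allowed), charging the $k$-th slot the marginal cost $c_i(k)-c_i(k-1)$. Convexity of $c_i$ makes these marginal costs non-decreasing in $k$, which is exactly what is needed to preserve the matroid-exchange step underlying the reduction to single-location moves: among candidate subsets of equal size the cost cancels, while between subsets of different sizes the sign of the marginal-cost difference goes in the right direction to localize any profitable larger deviation to a profitable one-element move. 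With this reformulation, the lexicographic argument of Ackermann et al.\ carries over with essentially no further change, and the hardest part of the write-up will be verifying this single-location reduction rigorously in the presence of the player-specific, location-specific rewards $v_i^m(\omega)$.
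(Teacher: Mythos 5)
Your reduction to the agent-normal form and to single-location moves is sound and is essentially what the paper does: it treats each pair $(\pi_i,j)$ of a cell and a capacity slot as a ``cell-unit,'' charges the $k$-th active slot the marginal cost $c_i(k)-c_i(k-1)$, and uses convexity exactly as you describe to show that a cell has a profitable deviation iff some single cell-unit does. The gap is in your termination argument. A lexicographic potential on the sorted load vector $\bigl(m_s(\omega)\bigr)_{\omega\in\Omega}$ cannot work here, because the rewards are player- and location-specific: a profitable single-location move can strictly \emph{increase} the sorted load vector, and there need not be any alternative improving move that decreases it. Concretely, take two players with trivial partitions, $\Omega=\{\omega,\omega'\}$, $K\equiv 1$, $c\equiv 0$, uniform prior, $v_1^1(\omega)=10$, $v_1^2(\omega)=9$, $v_1^1(\omega')=8$, and symmetrically $v_2^1(\omega')=10$, $v_2^2(\omega')=9$, $v_2^1(\omega)=8$. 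From the profile where player 1 searches $\omega'$ and player 2 searches $\omega$ (not an equilibrium), the \emph{only} improving moves are player 1 joining player 2 at $\omega$ or player 2 joining player 1 at $\omega'$; either one takes the sorted load vector from $(1,1)$ to $(2,0)$. So the invariant ``one can always pick a move that strictly lex-decreases the load vector'' is false, and no fix by a secondary tie-breaking statistic is available in general: these games can have better-response cycles (Milchtaich, 1996), so no ordinal potential of any kind exists, and an argument that exhibits a globally decreasing quantity along arbitrarily chosen improvements is structurally the wrong tool.

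What is needed instead (and what both the paper and, in substance, Ackermann et al.'s player-specific matroid result use) is an \emph{inductive stabilization} argument: maintain a set $B$ of cell-units that are already best-responding, add one more cell-unit $\alpha$, let it best-respond, and then track the single ``plus'' site that now has one occupant more than in the stable baseline. The key invariants are that only cell-units at the plus site can wish to move, that the plus travels with the mover, and that once a cell-unit switches into a location its expected reward there never falls below its level at the moment of the switch (the occupancy of that location can only return to, never exceed, what it was then), so it never profits from undoing the move; hence the process cannot cycle and must terminate. The paper additionally needs a second phase (restabilizing after a ``minus,'' i.e., after removing a dummy occupant from the site $\alpha$ vacated) and a careful treatment of activation/deactivation via the convexity of $c_i$. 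Without replacing your lexicographic step by an argument of this type, the proof does not go through.
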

\begin{proof}[Sketch of proof; formal proof is in Appendix \ref{subsec:Proof-of-weakly-acyclic}]
Define the payoff of a cell $\pi_{i}\in\Pi_{i}$ as the expected
payoff of player $i$ given that her signal is $\pi_{i}.$ Note that
player $i$ is best-responding iff every cell of $i$ is best-responding.
Player $i$ has $K_{i}$ units of capacity, which we index by $j=1,\ldots,K_{i}$.
A cell-unit of player $i$ is a pair $(\pi_{i},j)$, where $\pi_{i}\in\Pi_{i}$
is a cell, and $j$ a unit index. W.l.o.g. we assume that a strategy
chooses a specific location for every cell-unit $\alpha$, or chooses
that $\alpha$ be inactive. We define the payoff of a cell-unit $\left(\pi_{i},j\right)$
as the payoff of the cell $\pi_{i}$. Note that this payoff equals
the sum of the (interim) expected rewards in the locations of $\pi_{i}$'s
active cell-units, minus the cost of activating that many cell-units.

Given a strategy profile, suppose that there is no single inactive
cell-unit whose activation improves its own (i.e., the cell's) payoff.
Then activating multiple cell-units does not improve the cell's payoff
either, because of the convexity of the cost function. The case of
deactivation is similar. Therefore, we can show that a cell $\pi_{i}$
is best-responding iff every cell-unit of $\pi_{i}$ is best-responding.

The key part is Lemma \ref{lem:cell-unit} that says that if the members
of a set $B$ of cell-units (of various players) are best-responding,
and $\alpha\notin B$ is another cell-unit, then there is a sequence
of cell-unit improvements that ends with all the members of $B\cup\{\alpha\}$
best-responding. To prove weak acyclicity, start from any profile
$\sOne$, and using this lemma inductively add one cell-unit at a
time, until eventually everyone is best-responding.

To prove the lemma, we construct a sequence of improvements by the
members of $B\cup\{\alpha\}$. First, let $\alpha$ switch from its
current choice to its best-response. If $\alpha$ was active before
the switch, we add a dummy player in the location $\omega^{1}$ that
$\alpha$ left. Now begins a sequence we call Phase I. Suppose that
$\alpha$ switched to some location $\omega^{2}$. While cell-units
(of $B\cup\{\alpha\}$) not located in $\omega^{2}$ are still best-responding,
those in $\omega^{2}$ may now prefer to switch because of the extra
cell-unit in $\omega^{2}$ (call $\omega^{2}$ the current ``plus
location''). Let one of them switch to its best-response $\omega^{3}$,
and then another cell-unit may switch from $\omega^{3}$, etc. Phase
I goes on until everyone is best-responding, unless someone switches
to $\omega^{1}$, in which case Phase I is immediately terminated.

If a cell-unit is deactivated on stage $t$, it will not incentivize
another cell-unit to deactivate on stage $t+1$, because of the convexity
of costs. Moreover, Phase I will end after stage $t$, since there
would not be any plus location. 

To see that Phase I cannot go on forever, consider a cell-unit $\beta$
that switches from location $\omega$ to location $\omega'$, making
$\omega'$ the new plus location. The switch must strictly increase
$\beta$'s expected reward, and later the expected reward in $\omega'$
cannot drop below its current level; it may only be higher (if the
plus is somewhere else). Thus, $\beta$'s expected reward will never
drop back to the level it was at before the switch, even if $\beta$
does not improve again. Therefore, Phase I cannot enter a cycle; hence,
it must end.

Let $\sigma^{*}$ denote the strategy profile when Phase I ends. At
this point we remove the dummy from $\omega^{1}$, and denote the
resulting profile by $s^{*}$. If Phase I ended because someone switched
to $\omega^{1}$, then everyone is best-responding under $s^{*}$,
and we are done. Otherwise, Phase I ended because everyone was best-responding
under $\sigma^{*}$, and now follows Phase II.

While Phase I can be described as restabilizing after one cell-unit
is added, the analogous Phase II restabilizes after one cell-unit
is removed. First, one cell-unit switches from some location $\omega'$
to the current ``minus location'' $\omega^{1}$, then another switches
to $\omega'$, etc. On each stage we choose a cell-unit switch that
is best for its cell, i.e., there exists no cell-unit switch that
yields a higher increase in that cell's payoff. 

Phase II must eventually end, by the argument analogous to that of
Phase I. Then everyone is best-responding, and the lemma is proven.
\end{proof}
In particular, Proposition \ref{pro:sequence-unilateral-improvements}
implies that: 
\begin{cor}
Any search game admits a pure Nash equilibrium.
\end{cor}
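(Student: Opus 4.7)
The plan is to derive this as an immediate consequence of Proposition \ref{pro:sequence-unilateral-improvements}. Weak acyclicity says that from every starting profile there is an improvement path that terminates in a pure Nash equilibrium, so once we exhibit a single starting profile we are done.

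First I would observe that the set $S=\prod_{i\in N}S_{i}$ of pure strategy profiles is non-empty: each $S_{i}$ contains, for instance, the strategy that assigns to every cell $\pi_{i}\in\Pi_{i}$ the empty set (equivalently, no location is searched, which is a legal choice because $0\leq K_{i}$). Pick any such profile and call it $\sOne\in S$.

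Next, I would apply Proposition \ref{pro:sequence-unilateral-improvements} to $\sOne$: it yields a finite improvement path $(\sOne,\ldots,\sTee)$ whose terminal element $\sTee$ is a (pure) Nash equilibrium of $G$. Since $\sTee\in S$, this is exactly the conclusion of the corollary.

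The main conceptual step has already been handled inside Proposition \ref{pro:sequence-unilateral-improvements}; there is no additional obstacle here beyond noting that $S\neq\emptyset$ so that the weak acyclicity statement can actually be invoked. In particular, no fixed-point argument or explicit construction is needed, because weak acyclicity is strictly stronger than mere existence of a pure equilibrium.
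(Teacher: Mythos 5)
Your proposal is correct and matches the paper's reasoning exactly: the corollary is stated as an immediate consequence of Proposition \ref{pro:sequence-unilateral-improvements}, since weak acyclicity supplies an improvement path from any (nonempty set of) starting profile(s) terminating in a pure Nash equilibrium. The only addition you make, checking $S\neq\emptyset$, is a harmless formality.
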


\subsection{Existence of a Socially Optimal Equilibrium }

We begin by defining two properties required for our first main result
(Theorem \ref{thm:balanced}).

\paragraph{Ordinal Consistency}

Our first property requires that the ordinal ranking of any player
over her expected private values within a cell is (weakly) compatible
with society\textquoteright s ranking. That is, we say that a search
game has ordinally consistent payoffs if for any two locations $\omega$
and $\omega'$ in the same cell of player $i$, if the expected private
value of player $i$ is strictly lower in $\omega$ than in $\omega'$,
then the expected social value is weakly lower in $\omega$. 
\begin{defn}
\label{def:consistent}Search game $G$ has ordinarily consistent
payoffs if for any player $i$, any cell $\pi_{i}\in\Pi_{i}$, and
any two locations $\omega,\omega'\in\pi_{i}$, the following implication
holds:
\[
\mu\left(\omega\right)\cdot v_{i}^{1}\left(\omega\right)<\mu\left(\omega'\right)\cdot v_{i}^{1}\left(\omega'\right)\,\Rightarrow\,\mu\left(\omega\right)\cdot v_{\mathfrak{s}}\left(\omega\right)\leq\mu\left(\omega'\right)\cdot v_{\mathfrak{s}}\left(\omega'\right).
\]
Observe that having common values implies that the search game has
ordinally consistent payoffs. Further observe that if society has
uniform expected values (i.e., if $\mu\left(\omega\right)\cdot v_{\mathfrak{s}}\left(\omega\right)=\mu\left(\omega'\right)\cdot v_{\mathfrak{s}}\left(\omega'\right)$
for any two locations $\omega,\omega'\in\Omega$), then the search
game has ordinally consistent payoffs regardless of what the players'
private payoffs are.
\end{defn}

\paragraph{Solitary-Search Dominance}

Solitary-search dominance requires that any player always prefer searching
alone in any location to (1) searching jointly with other players
in another location within the same cell, or (2) leaving some of her
search capacity unused. Formally:
\begin{defn}
\label{def:balanced}Search game $G$ has solitary-search dominant
payoffs if
\begin{equation}
\mu\left(\omega|\pi_{i}\right)\cdot v_{i}^{1}\left(\omega\right)\geq\mu\left(\omega'|\pi_{i}\right)\cdot v_{i}^{2}\left(\omega'\right),\textrm{ and}\label{eq:SSD1}
\end{equation}
\begin{equation}
\mu\left(\omega|\pi_{i}\right)\cdot v_{i}^{1}\left(\omega\right)\geq c_{i}\left(K_{i}\right)-c_{i}\left(K_{i}-1\right),\label{eq:SSD2}
\end{equation}
for any player $i$, any cell $\pi_{i}\in\Pi_{i}$, and any pair\footnote{Due to the assumption of the cost function being convex, (\ref{eq:SSD2})
implies that $\mu\left(\omega|\pi_{i}\right)\cdot v_{i}^{1}\left(\omega\right)\geq c_{i}\left(k\right)-c_{i}\left(k-1\right)$
for any $1\leq k\leq K_{i}.$ A similar assumption of the search cost
being sufficiently small so that players always prefer searching alone
to not using their search capacity appears in \citet{chatterjee2004rivals}. } $\omega,\omega'\in\pi_{i}$. 
\end{defn}
Suppose first that there is no asymmetric information (which corresponds
to the case where all players have trivial information in our model).
If either ordinal consistency or solitary-search dominance are not
assumed, it is relatively easy to construct a game that does not admit
a socially optimal equilibrium (we construct such examples on Section
\ref{subsec:Necessity-of-All}). It is also not hard to show, on the
other hand, that ordinal consistency and solitary-search dominance
imply the existence of a socially optimal equilibrium. By virtue of
Proposition \ref{pro:sequence-unilateral-improvements}, we can now
show that this remains true with asymmetric information, as these
two conditions are sufficient for any search game.
\begin{thm}
\label{thm:balanced}Let $G$ be a search game with ordinally consistent
and solitary-search dominant payoffs. Then there exists a socially
optimal (pure) equilibrium.
\end{thm}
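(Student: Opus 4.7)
The plan is to deduce the theorem from Proposition~\ref{pro:sequence-unilateral-improvements}. Let $S^{*}=\{s\in S:U(s)=U_{\text{opt}}\}$ denote the (nonempty) set of socially optimal profiles, and pick any $s^{1}\in S^{*}$. Proposition~\ref{pro:sequence-unilateral-improvements} supplies an improvement path $s^{1},\ldots,s^{T}$ with $s^{T}$ a pure Nash equilibrium; inspection of its proof shows that the path can be chosen so that each step is an atomic \emph{cell-unit} improvement, i.e., a single cell-unit of a single player either activates at some $\omega\in\pi_{i}$, deactivates from some $\omega\in\pi_{i}$, or relocates from some $\omega\in\pi_{i}$ to some $\omega'\in\pi_{i}$, strictly increasing the payoff of cell $\pi_{i}$. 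It therefore suffices to establish the following closure property: if $s\in S^{*}$ and $s'$ is obtained from $s$ by a single cell-unit improvement, then $s'\in S^{*}$. Iterating this along the path yields $s^{T}\in S^{*}$, as desired.

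For the closure property, fix $s\in S^{*}$ and consider a cell-unit move by player $i$ in cell $\pi_{i}$; let $B(\omega)$ denote the set of players other than $i$ who search $\omega$ under $s$, and call $\omega$ \emph{solo} if $B(\omega)=\emptyset$. Since $s\in S^{*}$, the social payoff change $\Delta U$ of the move satisfies $\Delta U\leq 0$, so it suffices to show $\Delta U\geq 0$; note that coverage (hence social payoff) changes only at solo locations. For an \emph{activation} at $\omega$, coverage of $\omega$ weakly increases, so $\Delta U\geq 0$. For a \emph{deactivation} from $\omega$, solitary-search dominance together with the convexity of $c_{i}$ yields $\mu(\omega|\pi_{i})\,v_{i}^{1}(\omega)\geq c_{i}(k)-c_{i}(k-1)$ for every $k\leq K_{i}$; hence the deactivation can be a strict improvement only if $v_{i}^{|B(\omega)|+1}(\omega)<v_{i}^{1}(\omega)$, forcing $B(\omega)\neq\emptyset$, and coverage of $\omega$ is unchanged. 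For a \emph{relocation} from $\omega$ to $\omega'$: if $\omega$ is solo but $\omega'$ is not, solitary-search dominance gives $\mu(\omega'|\pi_{i})\,v_{i}^{|B(\omega')|+1}(\omega')\leq\mu(\omega'|\pi_{i})\,v_{i}^{2}(\omega')\leq\mu(\omega|\pi_{i})\,v_{i}^{1}(\omega)$, contradicting strict improvement; if both $\omega$ and $\omega'$ are solo, strict improvement reads $\mu(\omega)\,v_{i}^{1}(\omega)<\mu(\omega')\,v_{i}^{1}(\omega')$, and ordinal consistency upgrades this to $\mu(\omega)\,v_{\mathfrak{s}}(\omega)\leq\mu(\omega')\,v_{\mathfrak{s}}(\omega')$, so $\Delta U\geq 0$; and if $\omega$ is non-solo, its coverage is preserved while $\omega'$'s coverage can only weakly improve. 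Combining $\Delta U\geq 0$ with $\Delta U\leq 0$ gives $\Delta U=0$, so $s'\in S^{*}$.

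The main obstacle is that a \emph{general} unilateral improvement may alter many locations simultaneously, and a composite exchange of several solo locations for several others could in principle reduce the social payoff even while strictly benefiting the deviating player. This is precisely why the argument leverages the fact that the improvement path supplied by Proposition~\ref{pro:sequence-unilateral-improvements} can be taken to consist of atomic single-location moves; within each such atomic move, ordinal consistency (for within-solo swaps) and solitary-search dominance (ruling out solo-to-non-solo swaps and deactivations from solo locations) are exactly tight enough to forbid any socially-harmful transition, while convexity of the cost function ensures that deactivations from non-solo locations are the only deactivations that can be privately profitable.
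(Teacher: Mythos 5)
Your proof is correct and follows essentially the same route as the paper: start from a socially optimal profile, invoke Proposition~\ref{pro:sequence-unilateral-improvements} to obtain an improvement path of atomic single-cell-unit moves ending in an equilibrium, and show via the same case analysis (activation/deactivation, leaving an occupied location, solitary-search dominance ruling out solo-to-occupied moves and solo deactivations, ordinal consistency handling solo-to-solo relocations) that the social payoff cannot decrease along the path. Your write-up is somewhat more explicit about the deactivation case and about why convexity of costs matters there, but the underlying argument is identical to the paper's.
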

\begin{proof}
Consider a pure strategy profile that maximizes the social payoff.
Proposition \ref{pro:sequence-unilateral-improvements} implies that
there is a finite sequence of unilateral improvements that ends in
a Nash equilibrium. In what follows we show that the properties of
ordinal consistency and solitary-search dominance jointly imply that
the social payoff cannot decrease along that sequence. Without loss
of generality we can assume that each unilateral improvement consists
of changing merely a single choice within a single cell, since this
is in fact what the proof of Proposition \ref{pro:sequence-unilateral-improvements}
shows.

First we note that in each improvement, if the improving player leaves
a location in which there were multiple searchers, then the social
payoff cannot decrease. Next, solitary-search dominance implies that
if she leaves a location in which she is the sole searcher, then she
moves to an unoccupied location, as moving to an occupied location
would contradict (\ref{eq:SSD1}), and ``quitting'' (namely, deactivating
that unit of capacity) would contradict (\ref{eq:SSD2}). Finally,
ordinal consistency implies that if she moves to being the sole searcher
in another location, then the social payoff must weakly increase.
\end{proof}
In the socially optimal equilibrium, search costs may sometimes deter
a player from searching in some location $\omega$ if other players
might search there as well. Inequality (\ref{eq:SSD2}) merely states
that she will never be deterred by costs if she can search in $\omega$
alone.

Our next result states that even without the ordinal consistency assumption,
some efficiency is still guaranteed, in the sense that there exists
an equilibrium that maximizes the number of locations in which the
players search. Formally:
\begin{cor}
\label{cor:location-maximizing}Every search game $G$ with solitary-search
dominant payoffs admits a location-maximizing equilibrium.
\end{cor}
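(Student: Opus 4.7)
The plan is to deduce the corollary from Theorem~\ref{thm:balanced} via a reduction, rather than to rerun the argument of that theorem from scratch. The idea is to modify only the social value function so that the resulting notion of ``socially optimal'' coincides with ``location-maximizing,'' while leaving every player's incentives, and the condition of solitary-search dominance, intact.

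Concretely, I would define an auxiliary search game $G'$ that is identical to $G$ in every component except that the social value is replaced by $v_{\mathfrak{s}}'(\omega):=1/\mu(\omega)$ for every $\omega\in\Omega$. Without loss of generality we may assume $\mu(\omega)>0$ for every $\omega$, since states with zero prior probability contribute to neither the expected social payoff nor any player's expected payoff, and so can be dealt with by a routine restriction of $\Omega$. Under this choice, $\mu(\omega)\cdot v_{\mathfrak{s}}'(\omega)=1$ for every $\omega$, so the expected social payoff in $G'$ is
\[
U'(s)\;=\;\sum_{\omega\in\Omega}\mu(\omega)\,v_{\mathfrak{s}}'(\omega)\,\boldsymbol{1}_{\{m_{s}(\omega)\geq1\}}\;=\;\sum_{\omega\in\Omega}\boldsymbol{1}_{\{m_{s}(\omega)\geq1\}},
\]
which is precisely the number of locations in which the prize is found. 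Hence the socially optimal strategy profiles of $G'$ are exactly the location-maximizing profiles of $G$.

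It remains to verify the hypotheses of Theorem~\ref{thm:balanced} for $G'$. Since $\mu(\omega)v_{\mathfrak{s}}'(\omega)=\mu(\omega')v_{\mathfrak{s}}'(\omega')$ for all $\omega,\omega'\in\Omega$, ordinal consistency holds trivially by the uniform-expected-values observation immediately following Definition~\ref{def:consistent}. Solitary-search dominance is a condition on $v_{i}^{1}$, $v_{i}^{2}$, and $c_{i}$ only, none of which are altered in passing from $G$ to $G'$, so it is inherited. Applying Theorem~\ref{thm:balanced} to $G'$ yields a pure equilibrium $s^{*}$ that maximizes $U'$; since each player's payoff function depends only on her own $v_{i}$ and $c_{i}$, the sets of Nash equilibria of $G$ and $G'$ coincide, so $s^{*}$ is also an equilibrium of $G$ and, by the equivalence above, is location-maximizing in $G$. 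The only mildly delicate point is the bookkeeping around null states mentioned above; I do not foresee a deeper obstacle, as the whole argument piggybacks on the work already done to prove Theorem~\ref{thm:balanced}.
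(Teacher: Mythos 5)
Your proof is correct and is essentially identical to the paper's own argument: the paper likewise constructs an auxiliary game with $\hat{v}_{\mathfrak{s}}(\omega)=\nicefrac{1}{\mu(\omega)}$, notes that ordinal consistency and solitary-search dominance hold there, and applies Theorem \ref{thm:balanced}. Your added remark about zero-probability states is a harmless bit of extra bookkeeping.
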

\begin{proof}
Let $\hat{G}=\left(N,\Omega,\Pi,\mu,K,c,\hat{v}\right)$ be a search
game similar to $G=\left(N,\Omega,\Pi,\mu,K,c,v\right)$, except that
$\hat{v}_{\mathfrak{s}}\left(\omega\right)=\nicefrac{1}{\mu\left(\omega\right)}$
for any $\omega\in\Omega$. Observe that $\hat{G}$ has ordinally
consistent and solitary-search dominant payoffs. This implies that
$\hat{G}$ admits a socially optimal equilibrium $\hat{s}$. Observe
that the definition of $\hat{v}_{\mathfrak{s}}$ implies that $\hat{s}$
is a location-maximizing strategy profile. Further observe that $\hat{s}$
is also an equilibrium of $G$ (as $G$ and $\hat{G}$ differ only
in the social payoff).
\end{proof}
In particular, any game with solitary-search dominant payoffs that
admits an exhaustive strategy profile, also admits an exhaustive equilibrium.

\paragraph{Price of Stability/Anarchy}

Theorem \ref{thm:balanced} states that there is an equilibrium that
maximizes the social payoff (i.e., that the price of stability is
1)\footnote{The price of stability (resp., anarchy) is defined as the ratio between
the socially optimal payoff $U_{\textrm{opt}}$ and the maximal (resp.,
minimal) social payoff induced by a Nash equilibrium; i.e., $PoS=\frac{U_{\textrm{opt}}}{\max_{s\in NE\left(G\right)}U\left(s\right)}$
and $PoA=\frac{U_{\textrm{opt}}}{\min_{s\in NE\left(G\right)}U\left(s\right)}$,
where $NE\left(G\right)$ is the set of Nash equilibria.} in any search game with ordinally consistent and solitary-search
dominant payoffs. By contrast, Figure \ref{fig:example-price-1} demonstrates
that the social payoff might be substantially lower in other Nash
equilibria (i.e., a price of anarchy larger than 1). \textcolor{brown}{Thus,
Theorem \ref{thm:balanced} is arguably more compelling in environments
where society is able to induce the play of the socially optimal equilibrium,
rather than other equilibria.}

\begin{figure}[h]
\begin{centering}
\includegraphics[scale=0.36]{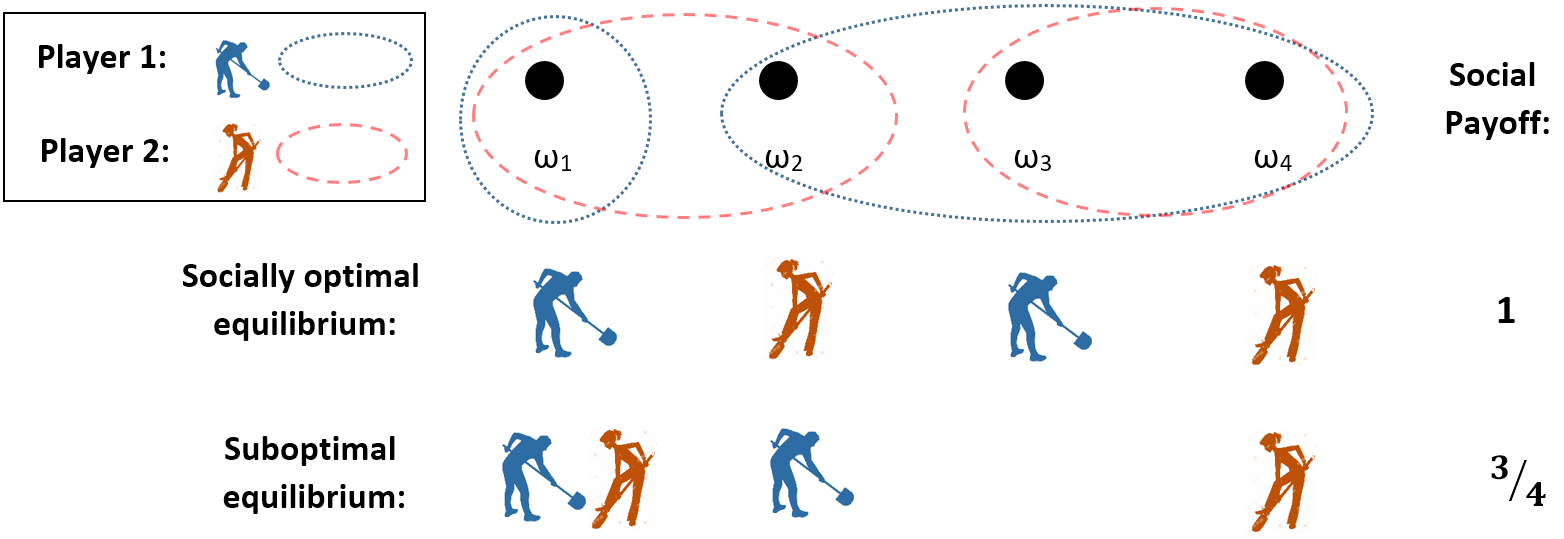}
\par\end{centering}
\caption{\label{fig:example-price-1}\textbf{Example for the price of anarchy}.
The figure presents two equilibria in a two-player search game with
ordinally consistent and solitary-search dominant payoffs (the ellipses
represent the partition elements), uniform prior, costless search
($c\equiv0$), reward of $v_{i}^{m}\equiv\frac{1}{m}$, social value
$v_{\mathfrak{s}}\equiv1$, and a capacity of one for every player.
The figure shows the location searched by each player for each possible
signal. For example, in the socially optimal equilibrium Player 1
searches in location $\omega_{1}$ when observing the signal $\left\{ \omega_{1}\right\} $
and searches in $\omega_{3}$ when observing the signal $\left\{ \omega_{2},\omega_{3},\omega_{4}\right\} $.
The first (resp., second) equilibrium is (resp., is not) socially
optimal with a social payoff of 1 (resp., 0.75). }
\end{figure}

\subsection{Necessity of All Assumptions in Theorem \ref{thm:balanced}\label{subsec:Necessity-of-All}}

The following three examples demonstrate that all the assumptions
of Theorem \ref{thm:balanced} are necessary to guarantee the existence
of a socially optimal equilibrium.

\paragraph{Necessity of Solitary-Search Dominance}

Example \ref{exa:neccesary-balnaced} demonstrates that solitary-search
dominance is necessary for Theorem \ref{thm:balanced}.
\begin{example}
\label{exa:neccesary-balnaced} For any $r\in\left(0,1\right)$ let
\[
G=\left(N=\left\{ 1,2\right\} ,\Omega=\left\{ \omega,\omega'\right\} ,\Pi\equiv\left\{ \Omega\right\} ,\mu,K\equiv1,c\equiv0,\left(v_{i}^{1}\equiv1,v_{i}^{2}\equiv r,v_{\mathfrak{s}}\equiv1\right)\right)
\]
be a two-player search game with trivial information partitions (namely,
each partition $\Pi_{i}$ contains a single element, which is the
whole $\Omega$), and a common prior $\mu$ defined as follows: $\mu\left(\omega\right)=\frac{2}{3}$
and $\mu\left(\omega'\right)=\frac{1}{3}$. Both locations induce
a private value of 1 to a sole searcher and a private value of $r\in\left(0,1\right)$
in case of simultaneous searches. Note that $G$ has ordinally consistent
payoffs, and that it satisfies solitary-search dominance iff $r\leq0.5$.
In what follows we show that for any $r>0.5$ the unique best-reply
against an opponent who searches in location $\omega$ is to search
in $\omega$ as well (which implies that searching in $\omega$ is
a dominant strategy). This is so because searching in $\omega$ yields
an expected payoff of $\frac{2}{3}\cdot r$, while searching in $\omega'$
yields $\frac{1}{3}\cdot1$. This, in turn, implies that the unique
equilibrium is both players searching in $\omega$, which is suboptimal.
\end{example}

\paragraph{Necessity of Ordinal Consistency}

Example \ref{exa:neccesary-consistent} demonstrates that the consistency
requirement is necessary to guarantee the existence of a socially
optimal equilibrium. Specifically, it shows that even for one-player
search games, and even when society and the player have the same ordinal
ranking over the values of the prize in each location and search is
costless, the unique Nash equilibrium is not necessarily socially
optimal if the ordinal consistency requirement is not satisfied.
\begin{example}
\label{exa:neccesary-consistent} Let $G=\left(N=\left\{ 1\right\} ,\Omega=\left\{ \omega,\omega'\right\} ,\Pi_{1}=\left\{ \Omega\right\} ,\mu,K_{1}=1,c_{1}=0,v\right)$
be a one-player search game with a prior $\mu\left(\omega\right)=\nicefrac{1}{4}$,
$\mu\left(\omega'\right)=\nicefrac{3}{4}$, and with values of $v_{\mathfrak{s}}\left(\omega\right)=2$,
$v_{\mathfrak{s}}\left(\omega'\right)=1$, $v_{_{1}}^{{\scriptscriptstyle 1}}\left(\omega\right)=4$,
and $v_{_{1}}^{{\scriptscriptstyle 1}}\left(\omega'\right)=1$. Observe
that the game's payoffs are trivially solitary-search dominant due
to having a single player and a costless search. It is simple to see
that the player searches in location $\omega$ in the unique equilibrium,
although this yields a lower social payoff than searching in $\omega'$.
\end{example}

\paragraph{Necessity of Simultaneous Searches }

An (implicit) key assumption in our model is that all searches are
done simultaneously. The following example demonstrates that Theorem
\ref{thm:balanced} is no longer true if players search sequentially. 
\begin{example}[\label{exam:seq}Sequential play, see Figure \ref{fig:Illustration-of-Example-seq}]
Let

\begin{figure}[h]

\caption{\label{fig:Illustration-of-Example-seq}Illustration of Example \ref{exam:seq}:
Sequential Play}

\begin{centering}
\includegraphics[scale=0.37]{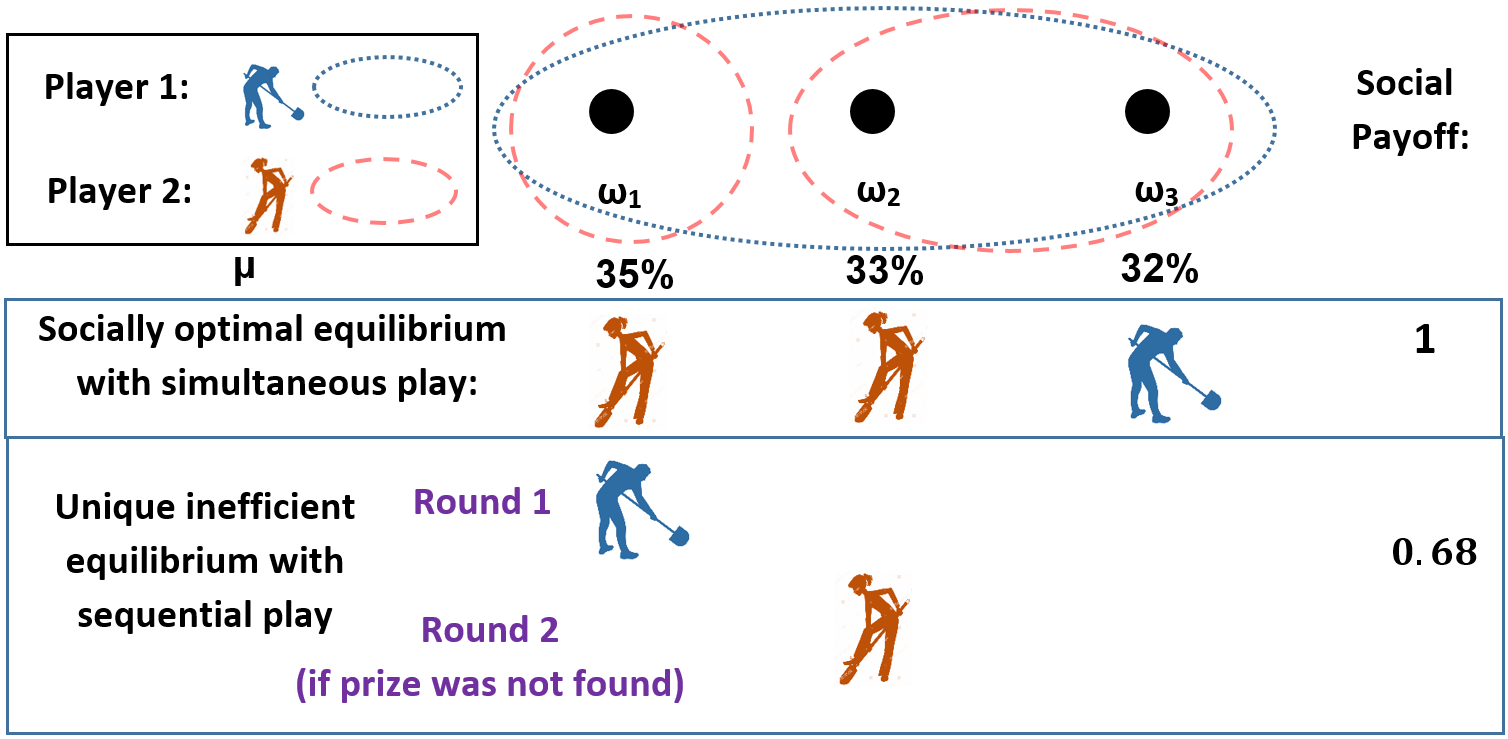}
\par\end{centering}
\end{figure}
\[
G=\left(N=\left\{ 1,2\right\} ,\Omega=\left\{ \omega_{1},\omega_{2},\omega_{3}\right\} ,\Pi,\mu,K\equiv1,c\equiv0,\left(v_{i}^{1}\equiv1,v_{i}^{2}\equiv0.5,v_{\mathfrak{s}}\equiv1\right)\right)
\]
 be a two-player search game with three locations, capacity 1 for
each player and costless search. All locations yield a private value
of 1, which is equally shared between simultaneous finders. Player
1 has the trivial partition $\Pi_{1}=\left\{ \Omega\right\} $, while
Player $2$ knows if the prize is in location $\omega_{1}$ or not
(i.e., $\Pi_{2}=\left\{ \left\{ \omega_{1}\right\} ,\left\{ \omega_{2},\omega_{3}\right\} \right\} $.
The  prior assigns slightly higher (resp., lower) probability to location
$\omega_{1}$ (resp., $\omega_{3}$), i.e., $\mu\left(\omega_{1}\right)=35\%$,
$\mu\left(\omega_{2}\right)=33\%$, $\mu\left(\omega_{3}\right)=32\%$.
Observe that the game satisfies ordinal consistency and solitary-search
dominance. In our model, in which players search simultaneously, the
game admits two (pure) Nash equilibria, both of which are socially
optimal: Player 1 searches in either location 2 or 3, and Player 2
searches in the remaining two locations. 

By contrast, if the game is sequential and Player 1 plays first, then
the game admits a unique equilibrium, which is not efficient: Player
1 searches in $\omega_{1}$, and if the prize has not been found,
Player 2 searches in location $\omega_{2}$ (and no player searches
in location $\omega_{3}$). Note that this profile is the unique equilibrium
regardless of whether or not the model lets Player 2 observe the location
in which Player 1 searched in the previous round.
\end{example}

\subsection{Implications for Innovation Contests\label{subsec:Insights-for-Innovation}}

Consider the setup of an innovation contest, in which a contest designer,
who wishes to maximize the social payoff, might influence the private
payoffs of players by offering a monetary bonus to the prize's finder,
which is added to the inherent reward. In what follows we sketch a
few implications of Theorem \ref{thm:balanced} in a contest with
asymmetric information, while leaving the interesting question of
characterizing the optimal bonuses in this setup to future research. 

Observe first that if the private payoffs satisfy ordinal consistency
and solitary-search dominance, then Theorem \ref{thm:balanced} implies
that the designer can maximize the social payoff without offering
any bonus: the designer is only required to be able to give nonenforced
recommendations to the players (which allows him to induce the play
of the socially optimal Nash equilibrium, rather than other equilibria).
In what follows we consider the case in which solitary-search dominance
is violated in the search game (without additional monetary bonuses).

Consider first a setup in which the contest designer can only offer
a constant bonus, which is independent of the prize's location. A
constant bonus can help to increase the relative expected private
value of locations with a high prior probability. As a result, it
can help obtain the optimal social payoff, when the reason for not
having the required properties without the designer's intervention
is a low-prior location having a too-high private value. For example,
consider a search game with costless search (i.e., $c\equiv0$), where
there are two locations $\omega,\omega'$ in the same cell of player
$i$ with priors $\mu\left(\omega\right)=0.1$ and $\mu\left(\omega'\right)=0.2$
and with private values of $v_{i}^{1}\left(\omega\right)=5$ and $v_{i}^{1}\left(\omega'\right)=1$,
and $v_{i}^{m}\equiv\frac{1}{m}v_{i}^{1}$. The too-high private value
of location $\omega$ violates solitary-search dominance because the
expected private value in $\omega$ ($0.5=0.1\cdot5$) is more than
twice the expected private value in $\omega'$ ($0.2\cdot1$). A constant
bonus of $1$ would restore solitary-search dominance (making the
expected private value of $\omega$ and $\omega'$ to be equal to
$0.6=0.1\cdot\left(5+1\right)$ and $0.4=0.2\cdot\left(1+1\right)$,
respectively).

When the designer can offer a location-dependent and player-dependent
bonus, it allows him to obtain solitary-search dominance and ordinal
consistency when faced with any profile of rewards. An interesting
open question is how the designer can maximize the social payoff,
while minimizing the expected bonus. For example, assume that the
payoffs are ordinally consistent, but they are not solitary-search
dominant. Theorem \ref{thm:balanced} suggests that the designer should
boost locations that have lower expected private values (which violate
solitary-search dominance). Note that these locations might not coincide
with the locations that are not searched by any player in the inefficient
equilibrium. This is demonstrated in Example \ref{exa:innocation-competition}.
\begin{example}
\label{exa:innocation-competition}Consider the following search game
with common values (as illustrated in Figure \ref{fig:Illustration-of-Example}):
\begin{figure}[h]
\begin{centering}
\includegraphics[scale=0.35]{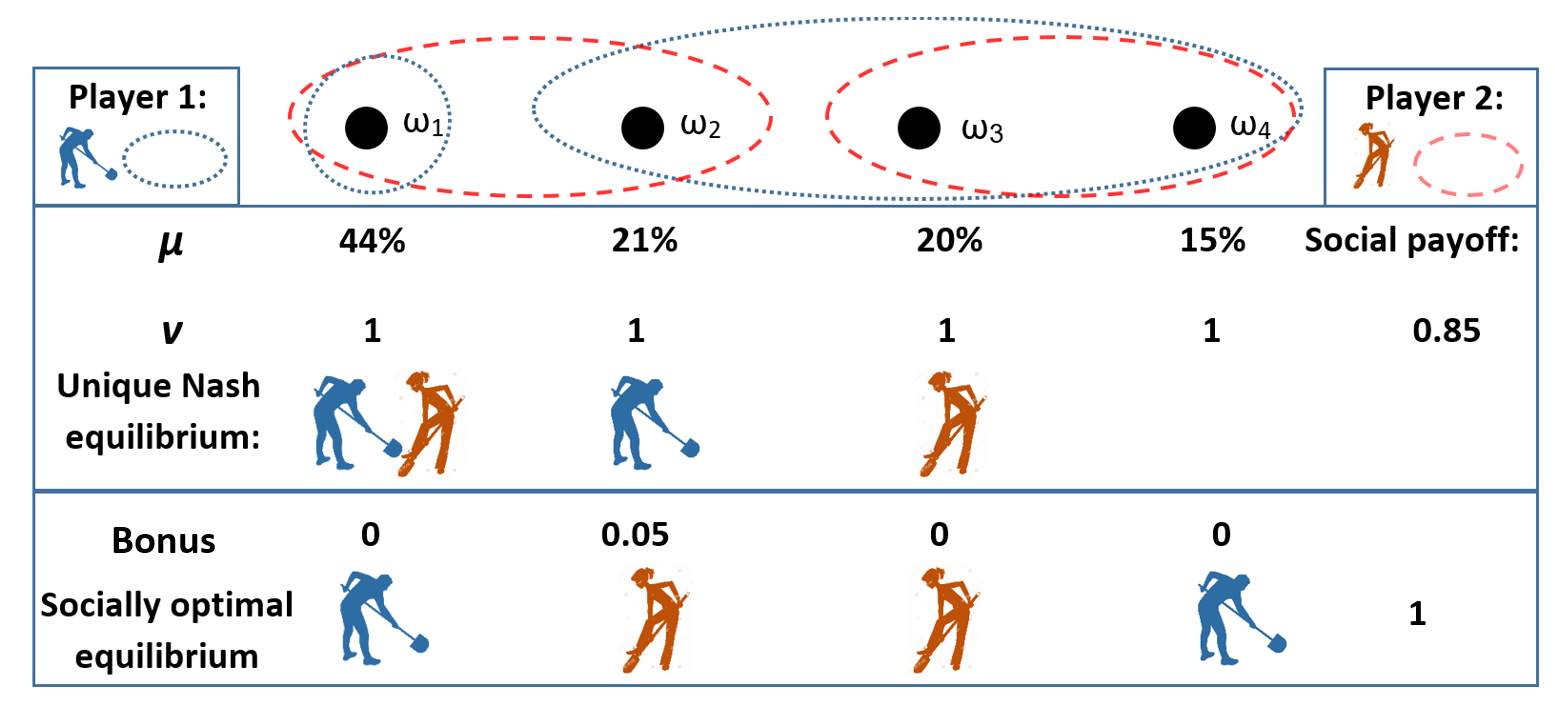}
\par\end{centering}
\caption{\label{fig:Illustration-of-Example}Illustration of Example \ref{exa:innocation-competition}:
Impact of Monetary Bonuses on the Social Payoff}
\end{figure}
$\left(N=\left\{ 1,2\right\} ,\Omega=\left\{ \omega_{1},\omega_{2},\omega_{3},\omega_{4}\right\} ,\Pi,\mu,K\equiv1,c\equiv0,v_{i}^{m}\equiv\frac{1}{m},v_{\mathfrak{s}}\equiv1\right)$,
where the prior is $\mu\left(\omega_{1}\right)=44\%$, $\mu\left(\omega_{2}\right)=21\%,$
$\mu\left(\omega_{3}\right)=20\%$ and $\mu\left(\omega_{4}\right)=15\%$,
player 1 observes whether the prize's location is 1 or not, i.e.,
$\Pi_{1}=\left\{ \left\{ \omega_{1}\right\} ,\left\{ \omega_{2},\omega_{3},\omega_{4}\right\} \right\} $,
and player 2 observes whether the prize's location is at most 2 or
not, i.e., $\Pi_{2}=\left\{ \left\{ \omega_{1},\omega_{2}\right\} ,\left\{ \omega_{3},\omega_{4}\right\} \right\} $.
The game admits a unique equilibrium, in which player 1 searches in
$\omega_{1}$ and $\omega_{2}$, while player 2 searches in locations
$\omega_{1}$ and $\omega_{3}$. This equilibrium yields an expected
social payoff of 0.85 because no player searches in $\omega_{4}$.
Note that solitary-search dominance is violated because of the low
probability of location $\omega_{2}$ (rather than a low probability
of $\omega_{4}$).

If the designer can offer a bonus of 0.05 that increases the private
value in location $\omega_{2}$ by 5\% to 1.05 (which requires a modest
expected bonus of $21\%\cdot0.05\approx0.01$), then the modified
rewards satisfy solitary-search dominance, and, as a result, the game
admits a socially optimal equilibrium with a social payoff of 1 (in
which player 1 searches in locations $\omega_{1}$ and $\omega_{4}$,
while player 2 searches in locations $\omega_{2}$ and $\omega_{3}$).
\end{example}

\section{Socially Optimal Payoff\label{sec:socially-optimal-payoff}}

Theorem \ref{thm:balanced} has provided conditions under which the
socially optimal (first-best) payoff is also yielded by some equilibrium.
In this section we present lower bounds (which are binding in many
cases, as demonstrated below) for the socially optimal payoff, namely,
for the highest social payoff yielded by any strategy profile. Thus,
we do not explicitly mention equilibria in this section; nevertheless,
we remind the reader that by Theorem \ref{thm:balanced} the socially
optimal payoff yielded in every result or example of this section
is also yielded by some equilibrium of the game, if payoffs are ordinally
consistent and solitary-search dominant.

Given a profile of strategies, society cares only about which locations
are being searched (by anyone) and which are not. Define a pure outcome
as a function $f:\Omega\rightarrow\left\{ 0,1\right\} $ that specifies
which locations are being searched. A strategy profile induces a pure
outcome, and, of course, not every outcome can be induced by strategies.
Similarly, we define a mixed outcome as a function $f:\Omega\rightarrow\left[0,1\right]$,
where $f\left(\omega\right)$ is the probability that $\omega$ is
being searched. A lottery over strategy profiles induces a mixed outcome.
The notion of mixed outcomes turns out to be helpful for deriving
and presenting our lower bounds for the socially optimal payoff, as
demonstrated in the following example.
\begin{example}[``Three coins'']
\label{exa:3-coins}Consider a game where the locations $\Omega=\left\{ 0,1\right\} ^{3}$
are vectors of three binary coordinates. Each player has a capacity
of one, the social payoff $v_{\mathfrak{s}}\equiv1$ is one in every
location, and $\mu$ is uniform over $\Omega$. There are three players,
and player $i$ knows the $i$-th coordinate of the prize's location.
Intuitively, this game can be interpreted as nature tossing three
coins to determine the prize's location, with each player observing
the result of one of these coin tosses. Let $f$ be the mixed outcome
that assigns $f\left(\omega\right)=3/4$ to every location $\omega$.
We will see that $f$ can be induced by a lottery over strategy profiles.
This implies that the socially optimal payoff is at least $3/4$.
\end{example}
Note that the socially optimal payoff in Example \ref{exa:3-coins}
cannot exceed $3/4$. The reason is that for any strategy profile,
the number of searched locations is at most the number of cells of
all players (multiplied by the capacity), $\sum_{i\in N}\thinspace K_{i}\cdot\left|\Pi_{i}\right|$\,,
which is six in the example, while $\left|\Omega\right|=8$. Similarly,
for any subset of locations $W\subseteq\Omega$, the number of searched
locations within $W$ is at most the number of cells that intersect
$W$ (namely, $\sum_{i\in N}K_{i}\cdot\left|\left\{ \pi_{i}\in\Pi_{i}\right\} :\pi_{i}\cap W\neq\emptyset\right|$).
Next, consider a mixed outcome $f$ induced by a lottery over strategy
profiles. The mixed outcome must satisfy that the expected number
of searched locations within $W$ (i.e., $\sum_{\omega\in W}f\left(\omega\right))$
is at most the number of cells that intersect $W$; that is, $f$
must satisfy the following compatibility property.
\begin{defn}
\label{def:compatibility}A mixed outcome $f$ is compatible with
the information structure (henceforth, compatible) if for any subset
of locations $W\subseteq\Omega$, the following inequality holds:
\begin{equation}
\sum_{\omega\in W}f\left(\omega\right)\leq\sum_{i\in N}K_{i}\cdot\left|\left\{ \pi_{i}\in\Pi_{i}\right\} :\pi_{i}\cap W\neq\emptyset\right|\,\,.\label{eq:compatibility}
\end{equation}
\end{defn}
We will later show that the opposite also holds; namely, any compatible
outcome can be induced by a lottery over profiles. This will allow
us to derive our first lower bound for the socially optimal payoff. 

Let us now verify that $f$ of Example \ref{exa:3-coins} is compatible.
For any $W\subseteq\Omega$, the LHS of (\ref{eq:compatibility})
equals $3/4\cdot\left|W\right|$\,. The partition of every player
$i$ consists of two cells, each of size four. Therefore, the number
of cells of player $i$ that intersect $W$ is at least $\left|W\right|/4$;
hence, the RHS of (\ref{eq:compatibility}) is at least $\sum_{i\in N}\left|W\right|/4=3/4\cdot\left|W\right|=\sum_{\omega\in W}f\left(\omega\right)$. 

We now define another class of outcomes, those that are generated
by a fractional allocation, which will be helpful in deriving the
second lower bound for the socially optimal payoff.
\begin{defn}
A fractional allocation $\alpha=\left(\alpha_{1},\ldots,\alpha_{n}\right)$
specifies a nonnegative number $\alpha_{i}\left(\pi_{i},\omega\right)$
for every cell $\pi_{i}$ of player $i$ and every location $\omega\in\pi_{i}$,
such that $\sum_{\omega\in\pi_{i}}\,\alpha_{i}\left(\pi_{i},\omega\right)\leq K_{i}$\,
for any cell $\pi_{i}$\,.\\
A fractional allocation generates the mixed outcome $f_{\alpha}$
that assigns to every location $\omega$ the sum of the allocations
to $\omega$ over all players (as long as this sum does not exceed
one). That is, $f_{\alpha}\left(\omega\right)=\min$$\left(1,\,\sum_{i\in N}\thinspace\alpha_{i}\left(\pi_{i}(\omega),\omega\right)\,\right)$.
\end{defn}
\begin{example}
\label{exa:uniform-cells}Suppose that there are $n$ players, each
with a capacity of one, and all cells are of the same size $m$, where
$m\geq n$. Let $\alpha_{i}$ divide each player's capacity equally
between the locations within each cell, i.e., $\alpha_{i}\left(\pi_{i},\omega\right)=1/m$\,
for each cell $\pi_{i}$ and location $\omega\in\pi_{i}$\,. Then
this fractional allocation $\alpha=\left(\alpha_{1},\ldots,\alpha_{n}\right)$
generates the outcome $f\left(\omega\right)=n/m$\, in every $\omega$.
\end{example}
Suppose that each player in Example \ref{exa:uniform-cells} independently
employs a randomized strategy that chooses, within each cell, each
location with equal probability $1/m$. This is equivalent to a lottery
over strategy profiles that gives equal probability to every pure
profile that exhibits no idleness (i.e., every profile in which all
players always use their entire search capacity). Search duplication
occurs under some of these profiles, which implies that the induced
outcome would be, in every $\omega$, strictly less than $n/m$. By
contrast, in our definition of an outcome generated by a fractional
allocation, the accumulation in every $\omega$ is done without any
``waste.'' Nevertheless, we now show that this wasteless outcome
can always be achieved by a lottery over profiles.

The following proposition states that both classes of outcomes defined
above coincide with the outcomes that can be induced by a lottery
over profiles. 
\begin{prop}
\label{prop:equivalent-outcome-classes}For a mixed outcome $f$,
the following conditions are equivalent:\\
(i) $f$ can be induced by a lottery over strategy profiles,\\
(ii) $f$ is compatible, and\\
(iii) $f$ can be generated by a fractional allocation.
\end{prop}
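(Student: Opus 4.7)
My plan is to close the equivalence by a cyclic argument $(i)\Rightarrow(ii)\Rightarrow(iii)\Rightarrow(i)$. The first implication is essentially bookkeeping: for any pure profile and any $W\subseteq\Omega$ the set $W\cap\bigcup_{i,\pi_{i}}s_{i}(\pi_{i})$ is a disjoint union over cells, and a cell $\pi_{i}$ contributes at most $K_{i}$ elements (and none when $\pi_{i}\cap W=\emptyset$); summing over players gives (\ref{eq:compatibility}) for every pure outcome, and compatibility passes to convex combinations, hence to every mixed outcome induced by a lottery.

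For $(ii)\Rightarrow(iii)$ I will invoke max-flow/min-cut. Build a directed network with a source, a sink, a node for each cell $(i,\pi_{i})$, and a node for each location $\omega$; insert arcs from the source to $(i,\pi_{i})$ of capacity $K_{i}$, arcs from $(i,\pi_{i})$ to $\omega$ of infinite capacity for every $\omega\in\pi_{i}$, and arcs from $\omega$ to the sink of capacity $f(\omega)$. A finite cut cannot sever any infinite arc, so it is parameterized by the set $W$ of locations it leaves on the sink side; cells disjoint from $W$ then stay on the source side, and the cut value becomes $\sum_{i,\pi_{i}:\,\pi_{i}\cap W\neq\emptyset}K_{i}+\sum_{\omega\notin W}f(\omega)$. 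Compatibility applied to $W$ shows this is at least $\sum_{\omega}f(\omega)$, so the minimum cut, hence the maximum flow, equals $\sum_{\omega}f(\omega)$. Every sink arc is then saturated, and declaring $\alpha_{i}(\pi_{i},\omega)$ to be the flow on the arc $(i,\pi_{i})\to\omega$ yields a fractional allocation with $f_{\alpha}=f$.

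The main work is $(iii)\Rightarrow(i)$, where I will invoke the Birkhoff--von Neumann theorem on an auxiliary bipartite graph. First observe that capping is harmless: whenever $\alpha_{i}(\pi_{i},\omega)>1$ or $\sum_{i}\alpha_{i}(\pi_{i}(\omega),\omega)>1$, I can lower the offending entries because the $\min(1,\cdot)$ in $f_{\alpha}$ discards any excess. With this reduction in force, split each cell into $K_{i}$ ``slots'' $(i,\pi_{i},j)$ and consider a bipartite graph whose left nodes are these slots, whose right nodes are $\Omega$ together with one private ``idle'' node per slot, and whose weights place $\alpha_{i}(\pi_{i},\omega)/K_{i}$ on the edge from $(i,\pi_{i},j)$ to $\omega\in\pi_{i}$ and the residual $1-\sum_{\omega}\alpha_{i}(\pi_{i},\omega)/K_{i}$ on that slot's idle edge. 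Every row sum is $1$ by construction, and every column sum at $\omega$ equals $\sum_{i}\alpha_{i}(\pi_{i}(\omega),\omega)=f(\omega)\leq1$, so the matrix lies in the polytope of row-stochastic column-substochastic bipartite matchings, whose extreme points are $0/1$ matrices. A Birkhoff--von Neumann decomposition therefore becomes a lottery over pure profiles: each extreme point specifies, for every slot, a unique chosen option, and I recover $s_{i}(\pi_{i})$ as the set of locations chosen by the slots of $(i,\pi_{i})$. The column constraint forces no two slots ever to pick the same $\omega$, so ``$\omega$ is searched'' coincides with ``$\omega$'s column contains a $1$'', and taking expectations over the lottery yields $\Pr(\omega\text{ searched})=f(\omega)$ as required. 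The hard part is engineering the per-slot normalization together with the private idle nodes so that the allocation constraints translate precisely into the row/column bounds demanded by Birkhoff--von Neumann, and then verifying that the resulting integer matchings are genuinely pure strategy profiles of the original search game.
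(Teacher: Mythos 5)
Your proposal is correct and follows essentially the same route as the paper: the counting argument for (i)$\Rightarrow$(ii), the max-flow min-cut network on cells and locations for (ii)$\Rightarrow$(iii), and, for (iii)$\Rightarrow$(i), splitting each cell into $K_i$ capacity units and applying a Birkhoff--von Neumann-type decomposition to a (sub)stochastic matrix whose $0/1$ extreme points are pure profiles. The one step you assert rather than prove---that the extreme points of the row-stochastic, column-substochastic polytope are integral---is exactly what the paper establishes as its Lemma \ref{lem:extended-Birkhoff} by padding to a doubly stochastic square matrix, so your argument is complete once that standard fact is supplied.
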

Proposition \ref{prop:equivalent-outcome-classes} implies that the
outcome $f\left(\omega\right)=3/4$ \,$\forall\omega$ in Example
\ref{exa:3-coins} is induced by some lottery over strategy profiles.
This outcome yields a social payoff of $3/4$ and, hence, the support
of the lottery must contain a profile that yields at least that much;
therefore, the socially optimal payoff, $U_{\textrm{opt}}$, is at
least $3/4$ (and hence $U_{\textrm{opt}}=3/4$). Similarly, in Example
\ref{exa:uniform-cells} the outcome $f\left(\omega\right)=n/m$ \,$\forall\omega$\,
is induced by some lottery over profiles; therefore, the socially
optimal payoff is at least\, $n/m\cdot\sum_{\omega\in\Omega\,}\mu\left(\omega\right)v_{\mathfrak{s}}\left(\omega\right)$.
Thus, in general, Proposition \ref{prop:equivalent-outcome-classes}
implies the following two useful lower bounds for the socially optimal
payoff.
\begin{thm}
\label{thm:lower-bounds}(i) For any compatible outcome $f$, the
socially optimal payoff is at least the social payoff yielded by $f$,
i.e., $U_{\textrm{opt}}\geq\sum_{\omega\in\Omega\,}f\left(\omega\right)\cdot\mu\left(\omega\right)v_{\mathfrak{s}}\left(\omega\right)$.\\
(ii) For any fractional allocation $\alpha$, the socially optimal
payoff is at least the social payoff yielded by the generated outcome,
i.e., $U_{\textrm{opt}}\geq\sum_{\omega\in\Omega\,}f_{\alpha}\left(\omega\right)\cdot\mu\left(\omega\right)v_{\mathfrak{s}}\left(\omega\right)$.
\end{thm}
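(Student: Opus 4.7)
My plan is to derive Theorem \ref{thm:lower-bounds} as a short corollary of Proposition \ref{prop:equivalent-outcome-classes}, using a standard probabilistic-method averaging step. All the heavy lifting sits inside Proposition \ref{prop:equivalent-outcome-classes}; the argument here is essentially routine once that equivalence is available.

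For part (i), I would take an arbitrary compatible outcome $f$ and invoke the implication (ii)$\Rightarrow$(i) of Proposition \ref{prop:equivalent-outcome-classes} to obtain a lottery $\mathcal{L}$ over pure strategy profiles whose induced mixed outcome is exactly $f$; that is, under $\mathcal{L}$, the probability that some player searches location $\omega$ equals $f(\omega)$ for every $\omega\in\Omega$. Then, rewriting a pure profile's social payoff as
\[
U(s)=\sum_{\omega\in\Omega}\mu(\omega)\,v_{\mathfrak{s}}(\omega)\,\mathbf{1}_{m_{s}(\omega)\geq 1},
\]
linearity of expectation would give
\[
\mathbb{E}_{s\sim\mathcal{L}}[U(s)]=\sum_{\omega\in\Omega}\mu(\omega)\,v_{\mathfrak{s}}(\omega)\,f(\omega).
\]
Since the lottery is supported on finitely many pure profiles, at least one realization must attain a value no smaller than the mean; hence there exists a profile $s^{*}$ with $U(s^{*})\geq\sum_{\omega\in\Omega}\mu(\omega)\,v_{\mathfrak{s}}(\omega)\,f(\omega)$, and since $U_{\textrm{opt}}=\max_{s\in S}U(s)$, this yields (i).

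For part (ii), I would simply note that any outcome $f_{\alpha}$ generated by a fractional allocation $\alpha$ is itself compatible, by the implication (iii)$\Rightarrow$(ii) of Proposition \ref{prop:equivalent-outcome-classes}; applying part (i) to $f_{\alpha}$ then gives the desired bound with no further work. (Alternatively, one can go directly via (iii)$\Rightarrow$(i) to obtain a lottery inducing $f_{\alpha}$ and then run the same averaging step.)

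The only genuine obstacle is hidden inside Proposition \ref{prop:equivalent-outcome-classes}: the existence of a lottery over pure strategy profiles that induces an arbitrary compatible outcome is the step that requires the graph-theoretic machinery (max-flow min-cut and Birkhoff--von Neumann) flagged in the introduction. Once that equivalence is granted, Theorem \ref{thm:lower-bounds} follows essentially immediately from the averaging argument sketched above.
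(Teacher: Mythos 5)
Your proposal is correct and follows exactly the route the paper takes: Theorem \ref{thm:lower-bounds} is derived from Proposition \ref{prop:equivalent-outcome-classes} by obtaining a lottery over pure profiles that induces the outcome, computing the expected social payoff by linearity, and noting that some profile in the (finite) support must attain at least the mean. The paper states this averaging argument in the paragraph immediately preceding the theorem, so there is nothing to add.
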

In addition, Proposition \ref{prop:equivalent-outcome-classes} implies
the following characterization of the socially optimal payoff.
\begin{cor}
\label{cor:charcterization-1}$U_{\textrm{opt}}=\max_{f\in F_{C}}\,\sum_{\omega\in\Omega\,}f\left(\omega\right)\cdot\mu\left(\omega\right)v_{\mathfrak{s}}\left(\omega\right)=\max_{f\in F_{F}}\,\sum_{\omega\in\Omega\,}f\left(\omega\right)\cdot\mu\left(\omega\right)v_{\mathfrak{s}}\left(\omega\right)$,
where $F_{C}$ denotes the set of compatible outcomes and $F_{F}$
denotes the set of outcomes generated by fractional allocations.
\end{cor}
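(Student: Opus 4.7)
The plan is to derive the corollary directly from Proposition \ref{prop:equivalent-outcome-classes} and Theorem \ref{thm:lower-bounds}, by establishing the two inequalities $U_{\textrm{opt}} \leq \max_{f \in F_C}$ and $U_{\textrm{opt}} \geq \max_{f \in F_C}$ (and likewise for $F_F$).

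For the easy ``$\geq$'' direction for each max, I would simply invoke Theorem \ref{thm:lower-bounds}: taking the maximum of its right-hand side over all $f \in F_C$ (respectively, over all outcomes $f_{\alpha}$ generated by some fractional allocation $\alpha$) gives $U_{\textrm{opt}} \geq \max_{f \in F_C} \sum_{\omega} f(\omega)\mu(\omega)v_{\mathfrak{s}}(\omega)$ and similarly $U_{\textrm{opt}} \geq \max_{f \in F_F} \sum_{\omega} f(\omega)\mu(\omega)v_{\mathfrak{s}}(\omega)$. (The intuition behind this step, already developed in the text preceding the corollary, is that a compatible or fractionally-generated outcome is induced by a lottery over strategy profiles, and the expected social payoff of that lottery is a convex combination of the pure profiles' payoffs, so at least one pure profile in the support achieves the outcome's social value.)

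For the ``$\leq$'' direction, I would take a pure strategy profile $s^{*}$ attaining $U_{\textrm{opt}} = U(s^{*})$, and consider the pure outcome it induces, $f^{*}(\omega) = \boldsymbol{1}_{m_{s^{*}}(\omega) \geq 1}$. By construction, $\sum_{\omega} f^{*}(\omega)\mu(\omega)v_{\mathfrak{s}}(\omega) = U(s^{*}) = U_{\textrm{opt}}$. The outcome $f^{*}$ is induced by the degenerate lottery that places probability one on $s^{*}$, so condition (i) of Proposition \ref{prop:equivalent-outcome-classes} holds; by the equivalence, $f^{*}$ is both compatible (so $f^{*} \in F_C$) and generated by some fractional allocation (so $f^{*} \in F_F$). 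Consequently, each of $\max_{f \in F_C}$ and $\max_{f \in F_F}$ is at least $U_{\textrm{opt}}$, which combined with the previous direction yields equality.

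There is no real obstacle here: the content of the corollary lies entirely in Proposition \ref{prop:equivalent-outcome-classes}, which links the three equivalent formulations of realizable outcomes. The only thing one should be a little careful about is making sure that a pure strategy profile counts as a (degenerate) ``lottery over strategy profiles'' in the sense used in Proposition \ref{prop:equivalent-outcome-classes}, so that the pure outcome $f^{*}$ genuinely sits inside both $F_C$ and $F_F$; this is immediate from the definitions but is the one place where the argument needs to be stated explicitly.
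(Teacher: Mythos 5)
Your proof is correct and follows exactly the route the paper intends: the lower bound on $U_{\textrm{opt}}$ comes from Theorem \ref{thm:lower-bounds}, and the matching upper bound comes from observing that the pure outcome of an optimal profile is induced by a degenerate lottery and hence, by Proposition \ref{prop:equivalent-outcome-classes}, lies in both $F_{C}$ and $F_{F}$. The paper states the corollary without a separate proof precisely because it is this immediate consequence, so there is nothing to add.
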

\begin{proof}[Sketch of proof of Proposition \ref{prop:equivalent-outcome-classes};
formal proof in Appendix \ref{subsec:Proof-of-Claim-prob-to-smooth}]
We already explained why (i)$\Rightarrow$(ii). To see that (ii)$\Rightarrow$(iii),
we construct a flow network: a directed graph whose edges have flow
capacities. The graph connects every cell to the locations contained
in it, with infinite flow capacity (as illustrated in Figure \ref{fig:min-cut-1}).
We add a source vertex that connects to every cell, with flow capacity
$K_{i}$, and a sink vertex to which every location $\omega$ is connected,
with flow capacity $f(\omega)$. A cut is a subset of edges without
which there exists no path from the source to the sink. The compatibility
of $f$ implies that the minimal cut has a total capacity of $\sum_{\omega\in\Omega}f\left(\omega\right).$
Therefore, by the max-flow min-cut theorem (\citealp{ford1956maximal};
see a textbook presentation in \citealp[p. 723, Thm. 26.6]{cormen2009introduction}),
the network admits a flow of $\sum_{\omega\in\Omega}f\left(\omega\right).$
We define a fractional allocation $\alpha$ that generates $f$ by
letting ${\color{brown}\alpha}_{i}\left(\pi_{i},\omega\right)$ equal
the flow from $\pi_{i}$ to $\omega$.
\begin{figure}[h]
\begin{centering}
\includegraphics[scale=0.35]{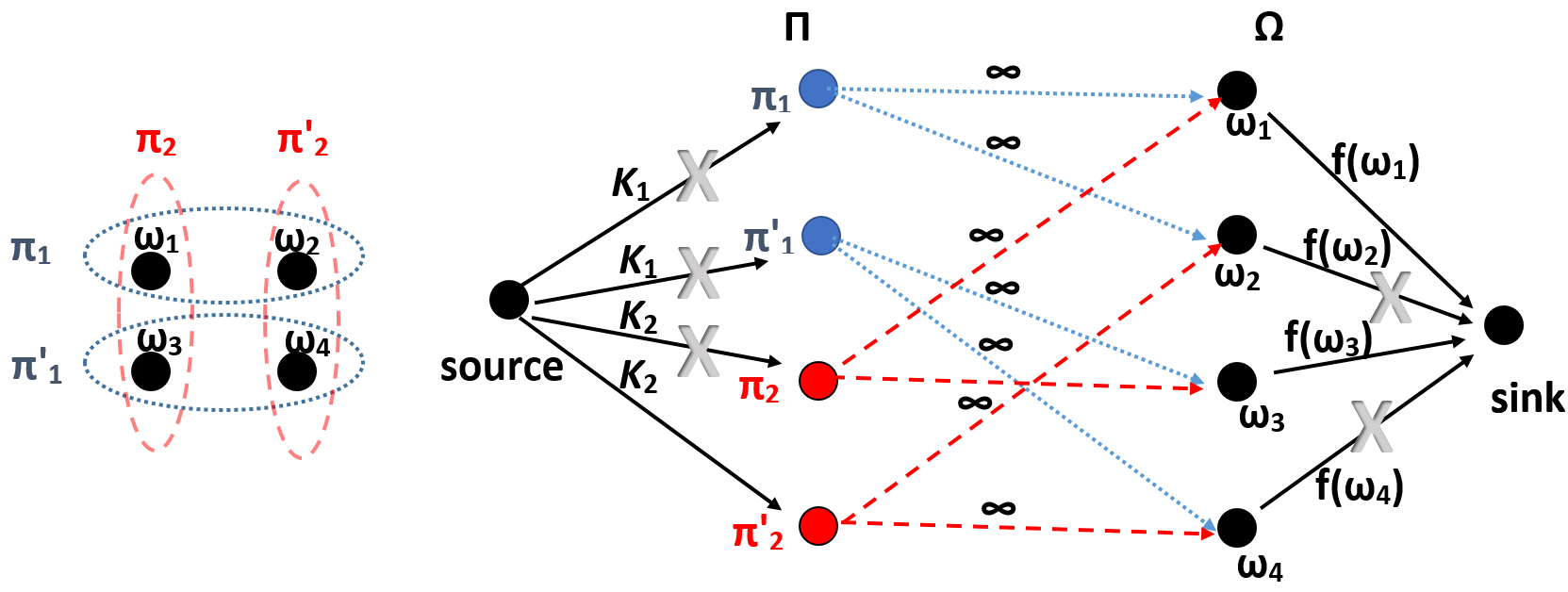}
\par\end{centering}
\caption{\label{fig:min-cut-1}\textbf{Illustration of }(ii)$\Rightarrow$(iii).
The left side of the figure demonstrates partitions in a two-player
search game. The right side demonstrates the constructed directed
graph in which (1) a source node is linked to every player's cells
by an edge with the player's capacity, and (2) each cell is linked
by unlimited edges to all the locations within that cell, and (3)
each location $\omega$ is linked to a sink node by an edge with capacity
$f\left(\omega\right)$. The gray X-s demonstrate an example of a
cut, i.e., a subset of edges whose removal from the graph disconnects
the source from the sink.}
\end{figure}

Now we show that (iii)$\Rightarrow$(i). To simplify the sketch of
the proof assume that all capacities are equal to one. Let $\alpha$
be a fractional allocation. We can represent $\alpha$ as a matrix
$\left(A_{\pi\omega}^{{\color{brown}\alpha}}\right)_{\omega\in\Omega,\pi\,\textrm{is\,a\,cell}}\,$,
where 
\[
A_{\pi\omega}^{{\color{brown}\alpha}}=\begin{cases}
{\color{brown}\alpha_{i}}\left(\pi,\omega\right) & \omega\in\pi,\,\pi\in\Pi_{i}\\
0 & \textrm{elsewhere\,.}
\end{cases}
\]
 Observe that $A_{\pi\omega}^{{\color{brown}\alpha}}$ is a nonnegative
matrix, and that the sum of each row $\pi$ is at most one. Let $B_{\pi\omega}^{{\color{brown}\alpha}}$
be a matrix derived from $A_{\pi\omega}^{{\color{brown}\alpha}}$
by decreasing elements of the matrix such that the sum of each column
$\omega$ that exceeded one in $A_{\pi\omega}^{{\color{brown}\alpha}}$
is equal to one in $B_{\pi\omega}^{{\color{brown}\alpha}}$. Observe
that $B_{\pi\omega}^{{\color{brown}\alpha}}$ is a doubly substochastic
matrix; i.e., it is a nonnegative matrix for which the sum of each
column and each row is at most one. A simple adaptation of the Birkhoff--von
Neumann theorem shows that $B_{\pi\omega}^{{\color{brown}\alpha}}$can
be represented as a convex combination of matrices $C_{\pi\omega}^{1},...,C_{\pi\omega}^{K}$
(i.e., $B_{\pi\omega}^{{\color{brown}\alpha}}=\sum w_{k}\cdot C_{\pi\omega}^{k}$
where $\sum w_{k}=1$ and $w_{k}\geq0$), where each matrix $C_{\pi\omega}^{k}$:
(1) contains only zeros and ones, and (2) contains in each row and
in each column at most a single value of one. Observe that each such
matrix $C_{\pi\omega}^{k}$ corresponds to a pure profile $s^{k}$
in the search game, and that the outcome $f_{{\color{brown}\alpha}}$
is a weighted sum of the outcomes induced by the profiles $s^{k}$.
This implies that $f_{{\color{brown}\alpha}}$ is induced by the lottery
over strategy profiles $\sigma=\sum w_{k}\cdot s^{k}$.
\end{proof}
Next, we present two examples that demonstrate the usefulness of the
lower bounds of Theorem \ref{thm:lower-bounds}. The first example
extends Example \ref{exa:uniform-cells}.

\addtocounter{example}{-1}
\begin{example}[revisited]
In Example \ref{exa:uniform-cells} we had $n$ players and cells
of uniform size $m\geq n$. In particular, if $m=n$ then we get that
$f\left(\omega\right)=n/m=1$ in every location. This implies that
the game admits an exhaustive strategy profile, i.e., $U_{\textrm{opt}}=\sum_{\omega\in\Omega\,}\mu\left(\omega\right)v_{\mathfrak{s}}\left(\omega\right)$.
Likewise, if $m<n$ then the game admits an exhaustive profile, since
$f\left(\omega\right)=\min\left(1,\,n/m\right)=1.$ 

More generally, when the capacity $K_{i}$ of each player $i$ is
not necessarily one and the cells are not of uniform size, let $M_{i}$
denote the size of the largest cell of player $i$. Then $f\left(\omega\right)\geq\min\left(1,\,\sum_{i\in N}\nicefrac{K_{i}}{M_{i}}\right)$
and $U_{\textrm{opt}}\geq\min\left(1,\,\sum_{i\in N}\nicefrac{K_{i}}{M_{i}}\right)\cdot\sum_{\omega\in\Omega\,}\mu\left(\omega\right)v_{\mathfrak{s}}\left(\omega\right)$.
In particular, if $\sum_{i\in N}\nicefrac{K_{i}}{M_{i}}\geq1$ then
the game admits an exhaustive strategy profile.\footnote{\label{fn:redundancy-free-condition}By a similar argument, let $m_{i}=\min\left(\left|\pi_{i}\right|:\pi_{i}\in\Pi_{i}\right)$;
then, if $\sum_{i\in N}\nicefrac{K_{i}}{m_{i}}\leq1$ then the game
admits a profile that always employs full capacity, while avoiding
search duplication (see also Definition \ref{def:redundancy-free}
in Section \ref{subsec:Cost-Inclusive-Social-Payoff}).}
\end{example}
We conclude this section with an example in which a compatible outcome
allows us to find the socially optimal payoff.
\begin{example}
Suppose that $\Omega$ contains ten vectors, each of four binary coordinates,
i.e., $\Omega\subset\left\{ 0,1\right\} ^{4}$ and $\left|\Omega\right|=10$.
There are four players, each of capacity one, and player $i$ knows
the $i$-th coordinate. Let us show that the outcome that assigns
$f\left(\omega\right)=0.8$ to every $\omega$ is compatible. Let
$W\subseteq\Omega$, and suppose first that all eight cells intersect
$W$. Then, since $\sum_{\omega\in W}\thinspace f\left(\omega\right)\leq\sum_{\omega\in\Omega}\thinspace f\left(\omega\right)=8,$
inequality (\ref{eq:compatibility}) holds. Now, a cell of player
$i$ does not intersect $W$ if every $\omega\in W$ resides in the
other cell of player $i$; i.e., the $i$-th coordinate is fixed across
$W$. Suppose that $k$ cells in all do not intersect $W$, $1\leq k\leq4$.
Then there are only $4-k$ ``free'' coordinates, therefore, $\left|W\right|\leq2^{4-k}$,
and hence $\sum_{\omega\in W}\thinspace f\left(\omega\right)\leq0.8\cdot2^{4-k}$.
Let us verify that this is less than the number of cells that do intersect
$W$, i.e., that\, $0.8\cdot2^{4-k}\leq8-k$. Indeed, for $k=1$
the LHS is $6.4$ and the RHS is $7$, and for larger $k$ the difference
increases. 

Since $f$ is compatible, if the social payoff is one in every location
and $\mu$ is uniform, then $U_{\textrm{opt}}=0.8$. Put differently,
there are eight cells (and thus at most eight locations can be searched);
and the compatibility of $f$ implies the existence of a strategy
profile under which eight locations are searched. 

Note that if $\Omega$ happens to have exactly five zeros and five
ones on each coordinate, then the fractional allocation that divides
equally within every cell generates the above outcome $f$. If $\Omega$
does not have this property, then identifying a fractional allocation
that generates $f$ might be somewhat difficult; nevertheless, Proposition
\ref{prop:equivalent-outcome-classes} tells us that such an allocation
does exist.
\end{example}

\section{Cost-Inclusive Social Payoff\label{subsec:Cost-Inclusive-Social-Payoff}}

In this subsection we consider a variant of our model, in which society
internalizes the players' search costs. We examine conditions under
which the game admits a socially optimal equilibrium, as in Theorem
\ref{thm:balanced}.

We define the cost-inclusive social payoff, conditional on the prize's
location being $\omega$, as 
\[
U_{CI}\left(s|\omega\right)=v_{\mathfrak{s}}\left(\omega\right)\cdot\boldsymbol{1}_{m_{s}\left(\omega\right)\geq1}-\sum_{i\in N}c_{i}\left(\left|s_{i}\left(\pi_{i}\left(\omega\right)\right)\right|\right),
\]
and the expected cost-inclusive social payoff is $U_{CI}\left(s\right)=\sum_{\omega\in\Omega}\,\mu\left(\omega\right)\cdot U_{CI}\left(s|\omega\right)$.

First, note that in the baseline model, in which society disregards
the players' costs, society always prefers player $i$ searching an
unoccupied location to player $i$ not searching. But here, with the
cost-inclusive social payoff, this may no longer be true. Thus, the
preference of an individual player $i$ who wishes to search despite
her cost (by the second part of the solitary-search dominance definition)
may diverge from society's preference, as society may wish player
$i$ to be idle. Hence, we add here the following assumption, which
says that the expected social value always exceeds any expected marginal
cost.
\begin{defn}
We say that social value outweighs cost if
\[
\mu\left(\omega\right)\cdot v_{\mathfrak{s}}\left(\omega\right)\geq\mu\left(\pi_{i}\right)\cdot\left(c_{i}\left(K_{i}\right)-c_{i}\left(K_{i}-1\right)\right)
\]
for any location $\omega$, player $i$, and cell $\pi_{i}$.
\end{defn}
Nevertheless, even under the assumption that social value outweighs
cost, on top of the assumptions of ordinal consistency and solitary-search
dominance in Theorem \ref{thm:balanced}, the game still need not
admit a socially optimal equilibrium, as the following simple example
demonstrates.
\begin{example}
Let $G=\left(N=\left\{ 1,2\right\} ,\Omega=\left\{ \omega\right\} ,\Pi,\mu,K\equiv1,c\equiv0.1,v_{i}^{m}\equiv\frac{`1}{m},v_{\mathfrak{s}}\equiv1\right)$
be a game with two players and a single location $\omega$. The dominant
action of each player is to search $\omega$; hence, both players
searching $\omega$ is the unique equilibrium, and this equilibrium
is not socially optimal (since in any cost-inclusive socially optimal
profile, only one player searches in $\omega$).
\end{example}
The players in the above example ``step on each other's toes'' because
the information structure does not allow them to each search in a
separate location. We now show that the game does admit a socially
optimal equilibrium if we assume, in addition, that the information
structure is spacious enough; more precisely, if we assume that the
information structure allows the existence of at least one redundancy-free
strategy profile, namely, a profile under which players utilize their
full search capacity and no two players search in the same location.
\begin{defn}
\label{def:redundancy-free}A strategy profile $s$ is redundancy-free
if (1) every player always uses her entire capacity (i.e., $\left|s_{i}\left(\pi_{i}\right)\right|=K_{i}$
for every cell $\pi_{i}$ of every player $i$), and (2) there is
no search duplication (i.e., $m_{s}\left(\omega\right)\leq1$ for
every $\omega\in\Omega$).
\end{defn}
\begin{prop}
\label{prop:cost-inclusive-redundancy-free}Suppose that payoffs are
ordinally consistent and solitary-search dominant, social value outweighs
cost, and there exists a redundancy-free strategy profile. Then the
game admits a (cost-inclusive) socially optimal pure equilibrium.
\end{prop}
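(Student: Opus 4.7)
The plan is to parallel the argument of Theorem~\ref{thm:balanced}: start from a cost-inclusive socially optimal profile, invoke Proposition~\ref{pro:sequence-unilateral-improvements} to reach a Nash equilibrium, and show that the cost-inclusive social payoff $U_{CI}$ does not decrease along the improvement path. The new wrinkle is that both the social value and the total cost must be tracked simultaneously.

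The first task is to show that some $U_{CI}$-optimal profile $s_{0}$ can be chosen to be redundancy-free. No $U_{CI}$-optimal profile contains duplication, since removing a duplicated search strictly reduces cost with no change in social value. To eliminate unused capacity, I would adapt the max-flow construction from the proof of Proposition~\ref{prop:equivalent-outcome-classes}: the hypothesized redundancy-free profile realizes a maximal flow in the bipartite cell-location network, so if some cell $\pi_{i}$ in $s_{0}$ has only $k<K_{i}$ active units, an augmenting path exists. This path corresponds to a chain of reassignments that activates one additional unit in $\pi_{i}$ targeted at some previously unsearched location $\omega_{l}$, while each intermediate cell merely switches one of its existing searches without changing its capacity usage, and no duplication is ever created. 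The resulting change in $U_{CI}$ is $\mu(\omega_{l})\cdot v_{\mathfrak{s}}(\omega_{l})-\mu(\pi_{i})\cdot(c_{i}(k+1)-c_{i}(k))$, which by convexity of $c_{i}$ and the assumption that social value outweighs cost is nonnegative; since $s_{0}$ is already $U_{CI}$-optimal the change is in fact zero, and iterating yields a redundancy-free $U_{CI}$-optimal profile.

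Given redundancy-free $s_{0}$, apply Proposition~\ref{pro:sequence-unilateral-improvements} to obtain an improvement path terminating at a Nash equilibrium, where each step changes a single choice within a single cell. I would then show by induction that every profile along this path is redundancy-free: from a redundancy-free profile, deactivation fails to improve by~(\ref{eq:SSD2}), activation is impossible since capacity is already full, and a switch to an occupied location is ruled out by~(\ref{eq:SSD1}); so every step is a switch by some player from a sole-searcher location $\omega_{1}$ to an unoccupied $\omega_{2}$, which preserves redundancy-freeness. Because the cost term $\sum_{i}c_{i}(K_{i})$ is constant along the path, ordinal consistency combined with the player's private-payoff improvement at $\omega_{2}$ forces $\mu(\omega_{2})\cdot v_{\mathfrak{s}}(\omega_{2})\geq\mu(\omega_{1})\cdot v_{\mathfrak{s}}(\omega_{1})$, so $U_{CI}$ is nondecreasing. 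The terminal Nash equilibrium is therefore also cost-inclusive socially optimal.

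The main obstacle is the reduction in the first step---showing that the existence of a redundancy-free profile, together with the social-value-outweighs-cost assumption, permits rerouting any $U_{CI}$-optimal profile into a redundancy-free one without decreasing $U_{CI}$. The max-flow-style augmenting-path construction is what makes this go through cleanly; once it is in place, the remainder of the argument is a direct adaptation of the proof of Theorem~\ref{thm:balanced}.
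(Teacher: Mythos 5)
Your proof is correct and follows essentially the same route as the paper's: both reduce the first step to a bipartite-matching exchange argument (the paper phrases it as an induction showing that a matching saturating all cell-units and a matching saturating $W$ can be merged into one; you phrase it as augmenting paths from idle cell-units to unsearched locations), both invoke the social-value-outweighs-cost assumption to pay for activating idle capacity, and the improvement-path half is identical. One small nit: when marginal costs are zero a $U_{CI}$-optimal profile \emph{can} contain duplication (removing it only weakly reduces cost), so you should claim that \emph{some} optimal profile is duplication-free rather than that none contains duplication.
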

\begin{proof}[Sketch of proof; formal proof is in Appendix \ref{subsec:Proof-of-Prop-redundanvy-free-optimal}]
\global\long\def\sTwo{s^{{\scriptscriptstyle 2}}}%
 Let $\sOne$ be a redundancy-free profile. Let $\sTwo$ be a (cost-inclusive)
socially optimal profile, and let $W$ denote the locations that are
searched (by anyone) under $\sTwo$. Start with $\sOne$; suppose
that there is a location $w\in W$ that is currently not being searched,
then switch the player who searched $w$ under $\sTwo$ from one of
her current locations to $w$. We repeat this step until we get a
profile $s$ under which the whole $W$ is searched, while $s$ is
still redundancy-free. 

First, we demonstrate that $s$ is socially optimal, by comparing
it to $\sTwo$. If there are locations outside $W$ that are searched
under $s$, then, by our assumption that social value outweighs cost,
the expected social payoff yielded by $s$ is at least as high as
that yielded by $\sTwo$ (i.e., the additional social value of these
locations exceeds the additional cost). Next, starting from $s$,
consider the improvement path of Proposition \ref{pro:sequence-unilateral-improvements}
that ends in an equilibrium. Since $s$ is redundancy-free, solitary-search
dominance implies that, along this path, no player switches either
to searching an occupied location or to being idle. A switch to an
unoccupied location that improves a player's payoff does not decrease
the social payoff, by ordinal consistency. Therefore, since $s$ is
socially optimal, so is the final equilibrium.
\end{proof}
\begin{rem}
Adding information to a player, i.e., refining her information partition,
always weakly increases the socially optimal payoff. Note, however,
that if we begin with an information structure that admits a redundancy-free
profile and then we refine some partitions, the structure becomes
more crowded, and we could end up with a structure that does not admit
a redundancy-free profile and, therefore, the new socially optimal
payoff may not be achieved by an equilibrium.
\end{rem}
\textcolor{brown}{If there is no redundancy-free profile, then the
existence of a socially optimal equilibrium is still guaranteed if,
in addition to the other assumptions of Proposition \ref{prop:cost-inclusive-redundancy-free},
the expected reward in any location is smaller, when the reward is
shared with others, than any marginal cost. The proof, then, is similar
to that of Theorem \ref{thm:balanced}. }

\section{Conclusion\label{sec:Discussion}}

Our paper studies search games in which agents explore different routes
to making a discovery that would benefit both society and the discoverer
(although the private gain may differ from the social gain). Our main
departure from the related literature is that we introduce asymmetric
information to this setup. That is, we allow each agent to have private
information about the plausibility of different routes, while almost
all of the existing literature assumes that all agents have the same
information. We believe that this is a natural development, as asymmetric
information is a key component in many real-life decentralized research
situations. In addition, we allow substantial heterogeneity between
the different routes (i.e., different expected values of finding the
prize in different locations). We also allow heterogeneity in the
rewards and costs of different players.

Our model is simplified in one aspect, as we assume that the search
is a one-shot game, while the dynamic aspects of the search interaction
are a key component in many of the existing models (see, e.g., \citealp{chatterjee2004rivals,akcigit2015role,bryan2017direction}).
While a one-shot game might model reasonably well situations with
severe time constraints, such as the motivating example of developing
a COVID-19 vaccine as soon as possible, we think that incorporating
asymmetric information in a dynamic search game is an important avenue
for future research. 

Our first main result (Theorem \ref{thm:balanced}) states that a
search game admits a (pure) equilibrium that yields the first-best
social payoff if for any two locations within a player's cell: (1)
the player and society have the same ordinal ranking over these two
locations (ordinal consistency), and (2) the player always prefers
searching in one of these locations alone to searching in the other
location with other players, or to not searching at all (solitary-search
dominance). \citet{taylor1995digging,fullerton1999auctionin,che2003optimal,koh2017incentive}
present setups of innovation contests in which it is socially optimal
to restrict the number of participating players, because adding a
player decreases the incentive of others to exert costly effort. By
contrast, Theorem \ref{thm:balanced} implies that adding players
to a search game with ordinally consistent and solitary-search dominant
payoffs always improves the maximal social payoff that can be yielded
by an equilibrium. This is so because the first-best social payoff
is (weakly) increasing when players are added. Thus, when the payoffs
are ordinally consistent and solitary-search dominant, it is socially
optimal to allow all players to participate. It is an open question
whether this property holds in our setup when we relax the assumptions
of ordinal consistency and solitary-search dominance.

Our second main result provides useful, and often binding, lower bounds
for the first-best payoff, whether it is achievable by an equilibrium
or not. These bounds are established by showing that the outcomes
that can be induced by a lottery over strategy profiles coincide with
the compatible outcomes and with the outcomes that can be generated
by a fractional allocation.

\appendix

\section*{Appendix: \label{sec:Proofs}Formal Proofs}

\section{\label{subsec:Proof-of-weakly-acyclic}Proof of Prop. \ref{pro:sequence-unilateral-improvements}
(Search Games are Weakly Acyclic)}

Given a strategy profile $s$, we define, for any cell $\pi_{i}\in\Pi_{i}$
of player $i$, the payoff of $\pi_{i}$ as the (interim) expected
payoff of player $i$ given that her signal is $\pi_{i}$, i.e., $u_{i}\left(s|\pi_{i}\right)=\sum_{\omega\in\pi_{i}}\,\mu\left(\omega|\pi_{i}\right)u_{i}\left(s|\omega\right)$.
Note that player $i$ is best-responding iff every cell of hers is
best-responding.

Player $i$ has $K_{i}$ units of capacity, which we index by numbers
between $1$ and $K_{i}$. A cell-unit of player $i$ is a pair $\left(\pi_{i},j\right)$
where $\pi_{i}\in\Pi_{i}$ is a cell, and $1\leq j\leq K_{i}$ is
a unit index. We can assume w.l.o.g. that a strategy specifies not
only in which locations within $\pi_{i}$ to search, but also which
specific cell-unit is assigned to each of these locations. Thus, for
every cell-unit $\alpha$ of $i$, a strategy of $i$ chooses a location
or chooses that $\alpha$ be inactive. We can think of a player as
being composed of many ``smaller'' decision makers, one for each
cell of hers, and of each cell as being composed of even smaller decision
makers, one for every cell-unit of that cell (with the restriction
that two cell-units of the same cell cannot search in the same location).
We define the payoff of cell-unit $\alpha=\left(\pi_{i},j\right)$
of player $i$ as the payoff of its cell $u_{i}\left(s|\pi_{i}\right)$.
Thus, every cell-unit of $\pi_{i}$ gets the same payoff. Note that
the expected reward of a cell-unit located at $\omega$ is $\mu\left(\omega|\pi_{i}\right)\cdot v_{i}^{m_{s}\left(\omega\right)}\left(\omega\right)$,
and $u_{i}\left(s|\pi_{i}\right)$ equals the sum of the expected
rewards of the active cell-units of $\pi_{i}$ minus the cost $c_{i}$
of the number of active cell-units of $\pi_{i}$.

Observe that if an inactive cell-unit $\left(\pi_{i},j\right)$ switches
to searching in location $\omega$, it makes the activation of another
cell-unit $\left(\pi_{i},j'\right)$ at $\omega'$ (weakly) less attractive
than it previously was, because of increasing marginal costs (namely,
the convexity of costs). Similarly, deactivating a cell-unit makes
a second deactivation weakly less attractive.

Given a strategy profile $s$, if there exists a deviation of a single
cell-unit that improves its own payoff then, by definition, it is
also an improvement for its cell. Conversely, let us show that the
existence of an improvement for a cell implies the existence of an
improvement for some cell-unit. Suppose first that the cell improvement
consists merely of changing the location of a few cell-units, without
changing the number of active units. Then the cost remains unchanged,
but the overall expected reward has increased. Hence, there must be
at least one cell-unit $\alpha$ whose expected reward has increased
by switching from its location $\omega$ to another location $\omega'$
that was not chosen by player $i$ under $s$. Therefore, switching
the location of $\alpha$ from $\omega$ to $\omega'$ is an improvement
for $\alpha$. Next, suppose that activating multiple cell-units is
an improvement. Then there must also exist an improvement consisting
of activating only one of these cell-units, due to the above observation
about convex costs. Similarly, if the cell can improve by deactivating
multiple cell-units then one of them can improve by deactivating itself. 

\global\long\def\sigmaTwo{\sigma^{{\scriptscriptstyle 2}}}%
Overall, we got that a player is best-responding iff all her cells
are best-responding iff all her cell-units are best-responding.
\begin{lem}
\label{lem:cell-unit}Suppose that $B$ is a set of cell-units (of
various players), $\alpha\notin B$ is another cell-unit, and $\sOne$
is a strategy profile under which every member of $B$ is best-responding.
Then there exists a finite sequence of cell-unit improvements $\sOne,\ldots,\sTee$
such that every member of $B\cup\{\alpha\}$ is best-responding under
$\sTee$.
\end{lem}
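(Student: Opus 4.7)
My plan is to carry out the two-phase construction outlined in the sketch of Proposition~\ref{pro:sequence-unilateral-improvements}. If $\alpha$ is already best-responding under $\sOne$ the sequence is trivial, so assume otherwise. I would begin by letting $\alpha$ perform its single-cell-unit best-response switch, obtaining profile $\sigmaTwo$. Let $\omega^1$ denote the location where $\alpha$ was active under $\sOne$ (if any) and $\omega^2$ the location to which $\alpha$ moved (if any). As a bookkeeping device I imagine a fictitious ``dummy'' cell-unit permanently parked at $\omega^1$ (belonging to no real player, incurring no cost). With this dummy in place, the congestion at every location other than $\omega^2$ is identical to its value under $\sOne$; therefore every member of $B$ not located at $\omega^2$ remains best-responding, and only cell-units of $B$ currently at the ``plus location'' $\omega^2$ may have lost their best-response status.

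In Phase I, at each stage I select some cell-unit $\beta\in B$ at the current plus location whose cell is not best-responding, and perform the single-cell-unit improvement that either relocates $\beta$ or deactivates it. Convexity of the cost function guarantees that whenever a cell strictly benefits from moving or deactivating several cell-units simultaneously, it already benefits from doing so with a single one; in particular a deactivation cannot trigger a second deactivation in the same cell. If $\beta$ switches to $\omega^1$, the dummy becomes superfluous and I delete it: the resulting profile differs from $\sOne$ only by having $\alpha$ at $\omega^2$ and $\beta$ at $\omega^1$, leaving every cell-unit in $B\cup\{\alpha\}$ best-responding and the construction finished. Otherwise $\beta$'s destination $\omega^3$ becomes the new plus location and I iterate; a deactivation likewise eliminates the plus and ends Phase I without cascading.

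The main obstacle is termination of Phase I. Following the sketch, I would appeal to the following monotonicity observation: when $\beta$ switches from $\omega$ to $\omega'$, the switch strictly raises $\beta$'s expected reward, and throughout the remainder of Phase I the expected reward of a cell-unit located at $\omega'$ cannot fall below its post-switch level, because at every later stage of Phase I the location $\omega'$ is either the current plus (the lowest reward level it can attain) or at base congestion (a strictly higher reward)---it is never more crowded than the former. Consequently $\beta$'s cell can never again strictly prefer to relocate a cell-unit back to $\omega$ in the configuration it had just left; since each stage is a strict cell improvement and the state space of cell-unit configurations is finite, no repetition of the plus trajectory is possible and Phase I halts in finitely many stages.

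If Phase I ends with every member of $B$ best-responding under a profile $\sigma^{*}$ that still carries the dummy at $\omega^1$, I remove the dummy to obtain $s^{*}$, in which $\omega^1$ is now under-congested---a ``minus location.'' Phase II mirrors Phase I: at each stage I pick, among all cell-units whose single-unit switch into the current minus location yields a strict cell-payoff improvement, one whose switch yields the largest such increase, execute that switch, and declare the vacated location as the new minus. The dual monotonicity---throughout Phase II no location is ever over-crowded, so the expected reward at a location that has just been vacated stays at the raised level until possibly an arrival restores it to base, never dropping below---gives termination by the same finiteness argument. When Phase II halts every cell-unit in $B\cup\{\alpha\}$ is best-responding, and concatenating the Phase I and Phase II sequences yields the required improvement sequence $\sOne,\ldots,\sTee$.
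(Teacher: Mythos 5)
Your proposal is correct and follows essentially the same route as the paper's proof: the initial best-response switch by $\alpha$, the bookkeeping dummy at the vacated site, Phase I tracking a single ``plus'' location with termination via the monotonicity of a mover's expected reward at its new location, and a Phase II (choosing a payoff-maximizing single-unit switch) that restabilizes after the dummy is removed. The one place where the paper does noticeably more work than your sketch is in justifying that cell-units sharing the mover's cell remain best-responding when the move is an activation or deactivation: there the congestion at the other locations is indeed unchanged, but the cell's marginal cost is not, and one needs the further observation (via convexity) that a same-cell unit tempted to deactivate after $\alpha$ activates would already have had a profitable relocation under $\sOne$, contradicting its membership in $B$.
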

\begin{proof}
For convenience of description, let us imagine that all the inactive
cell-unit of all players stay in some place that we denote by $\lambda$.
The set $\Omega\cup\left\{ \lambda\right\} $ of the locations plus
$\lambda$ will be called the set of sites. With this terminology,
we can say that a strategy of player $i$ chooses a site for every
cell-unit of $i$ (and $\lambda$ is the only site where more than
one cell-unit of the same player can be placed). 

In what follows, whenever we mention cell-units, it only refers to
members of $B\cup\{\alpha\}$. Note that the site of all other cell-units
will remain fixed along the sequence.

Suppose that $\alpha$ is not best-responding in $\sOne$; otherwise
we are done. Let $\alpha$ switch from its current site $\theta^{1}$
to another site $\theta^{2}$ that is a best-response. The new strategy
profile is $\sTwo$. Now $\alpha$ is best-responding, and we claim
that any other cell-unit $\beta$ of the same cell is still best-responding.
We note first that $\beta$ is not placed in $\theta^{1}$ (since
$\beta$ was best-responding under $\sOne$), and w.l.o.g. it is also
not in $\theta^{2}$ (otherwise, it is currently best-responding,
since $\alpha$ is). Next we note that $\beta$ cannot improve by
switching to $\theta^{1}$; otherwise, simply switching $\beta$ to
$\theta^{2}$ would have been an improvement earlier, in $\sOne$.

Suppose first that $\theta^{1}$ and $\theta^{2}$ are both locations.
Then, since $\theta^{2}$ is now occupied, and the preference relation
between sites other than $\theta^{1}$ and $\theta^{2}$ has not changed
(as the cost has not changed), $\beta$ is indeed still best-responding.
Next suppose that $\theta^{2}=\lambda$. Then, by the convex costs
observation, the attractiveness of $\lambda$ has decreased by the
switch from $\theta^{1}$ to $\theta^{2}=\lambda$, hence $\beta$
still cannot improve by switching to $\lambda$. And although the
cost has changed, the relation between locations other than $\theta^{1}$
has not changed; hence, $\beta$ is best-responding. Finally, suppose
that $\theta^{1}=\lambda$. Then the relation between locations other
than $\theta^{2}$ has not changed; hence, $\beta$ is best-responding.

\paragraph{Phase I }

In $\sTwo$, we add a dummy player at the site $\theta^{1}$, denoting
the resulting strategy profile by $\sigmaTwo$ (for a profile $\sigma^{t}$
that includes the dummy player, $s^{t}$ will denote the same profile
without the dummy). Then Phase I begins: at every stage of Phase I,
one cell-unit who is currently not best-responding switches to a best-response
site. This continues as long as there are non-best-responding cell-units,
unless someone switches to $\theta^{1}$, in which case Phase I immediately
terminates.

As we will see, Phase I begins by some cell-unit switching from $\theta^{2}$
to another site $\theta^{3}$, then another cell-unit switching from
$\theta^{3}$ to another site, etc. More specifically, we claim that
under any strategy profile $\sigma=\sigma^{t}$ encountered during
Phase I,

\noindent (a) there exists exactly one site $\theta$ that is chosen
by one more cell-unit than under $\sOne$, i.e., $m_{\sigma}(\theta)=m_{\sOne}(\theta)+1$,
while for every other site $\theta'$, $m_{\sigma}(\theta')=m_{\sOne}(\theta')$
(we call $\theta$ ``the plus site''); and

\noindent (b) for any cell-unit $\beta$ whose current site is some
$\theta$ with $m_{\sigma}(\theta)>0$, and who can also choose another
site $\theta'$, if there were $m_{\sOne}(\theta)$ cell-units at
$\theta$ (including $\beta$ itself) and $m_{\sOne}(\theta')$ cell-units
at $\theta'$, then $\beta$ would weakly prefer $\theta$ to $\theta'$.

\noindent Property (b) roughly says that if $\beta$ is not best-responding,
it is only because $\beta$ is in the plus site.

When Phase I starts, in $\sigmaTwo$, property (a) holds and $\alpha$
has just switched to the plus site $\theta^{2}$. Since all cell-units
of the cell of $\alpha$ were best-responding in $s^{{\scriptscriptstyle 2}}$
(i.e., without the dummy) they obey property (b) in $\sigmaTwo$ (in
particular, $\alpha$ is currently best-responding when $\alpha$'s
current site is the plus site, let alone when it is not the plus site).
As for cell-units of other cells, they obeyed (b) in $s^{{\scriptscriptstyle 1}}$
and, therefore, they still do, as the switch of $\alpha$ or the addition
of the dummy do not affect that.

The claim is proved by induction from one stage to the next: suppose
that cell-unit $\beta$ improves on stage $t$ by switching from $\theta$
to $\theta'$. Since $\beta$ could improve, (b) implies that $\theta$
must have been the plus site in stage $t$. Therefore, the plus will
move with $\beta$ from $\theta$ to $\theta'$, and (a) will still
hold in stage $t+1$. Note that, importantly, if the plus site is
$\lambda$ in some stage then every cell-unit is best-responding,
since the number of partners does not affect the reward in $\lambda$,
which simply equals $0$; hence, Phase I will end on that stage.

As for property (b), $\beta$ best-responds in $\theta'$, and other
cell-units of $\beta$'s cell obeyed (b) on stage $t$, implying that
they were best-responding on that stage. It follows, by the same argument
we used above for $t=1$ (i.e., the transition from $s^{{\scriptscriptstyle 1}}$
to $s^{{\scriptscriptstyle 2}}$), that they also best-respond on
stage $t+1$. Therefore, they obey (b); and cell-units of other cells
still obey (b), as they were not affected by $\beta$'s switch.

To see that Phase I cannot go on forever, recall first that it ends
if the plus site is $\lambda$. Otherwise, on each stage of Phase
I some cell-unit $\beta$ switches from location $\omega$ to location
$\omega'$, and the costs always remain fixed. Since this switch is
an improvement, it strictly increases the expected reward of $\beta$.
Now $\omega'$ becomes the plus site, and afterwards the expected
reward of $\beta$ when placed in $\omega'$ can never be lower than
it is now, while it can be higher if the plus is somewhere else (or
if $\beta$ improves again).\footnote{One can verify that, in fact, $\beta$ will not switch again during
Phase I. We employ a different argument here, in order to strengthen
the analogy with Phase II.} Thus, the expected reward of $\beta$ will never go down to the level
it was at before the switch. Hence, Phase I cannot turn into a cycle,
and since there are only finitely many strategy profiles, Phase I
must eventually end.

Recall that Phase I terminates once someone switches to $\theta^{1}$.
Therefore, the plus site cannot be $\theta^{1}$ during this phase
except maybe at the end, and hence nobody switches from $\theta^{1}$
during Phase I. Therefore, all the switches are improvements not only
in the game with the dummy added at $\theta^{1}$, but also in the
original game.

Denote the strategy profile at the end of Phase I by $\sigma^{*}$.
Now we remove the dummy from $\theta^{1}.$ Suppose first that Phase
I ended because somebody has just switched to $\theta^{1}$. Then,
the removal of the dummy means that now there is no plus site at all,
and (b) implies that every cell-unit best-responds under $s^{*}$
(recall that $s^{*}$ is $\sigma^{*}$ without the dummy player),
and we are done. 

\paragraph{Phase II}

Otherwise, Phase I ended at $\sigma^{*}$ because everyone was best-responding
(when the dummy was still at $\theta^{1}$). Starting from $s^{*}$,
we define Phase II analogously to Phase I (while Phase I more or less
described a process of restabilizing the system after one cell-unit
is added, Phase II describes restabilizing it after one cell-unit
is removed), as follows. At every stage, as long as there are cell-units
who are not best-responding, choose a cell, and choose a switch of
a single cell-unit that would yield the highest increase in that cell's
payoff.

As we will see, Phase II begins by some cell-unit switching to $\theta^{1}$
from some site $\theta'$, then another cell-unit switching to $\theta'$
from another site, etc. More specifically, we claim that under any
strategy profile $s=s^{t}$ encountered during Phase II,

\noindent (a') there exists exactly one site $\theta$ with $m_{s}(\theta)=m_{\sigma^{*}}(\theta)-1$,
while for every other site $\theta'$, $m_{s}(\theta')=\hat{m}_{\sigma^{*}}(\theta')$
(we call $\theta$ ``the minus site''); and

\noindent (b') for any cell-unit $\beta$ whose current site is some
$\theta$ and who can also choose site $\theta'$, if there were $m_{\sigma^{*}}(\theta)$
cell-units at $\theta$ (including $\beta$) and $m_{\sigma^{*}}(\theta')$
cell-units at $\theta'$, then $\beta$ would weakly prefer $\theta$
to $\theta'$.

The analysis is almost analogous to that of Phase I. When Phase II
starts, in the profile $s^{*}$, (a') holds and $\theta^{1}$ is the
minus site. (b') also holds because everyone was best-responding under
$\sigma^{*}$. The claim is proved by induction from one stage to
the next: suppose that cell-unit $\beta$ improves on stage $t$ by
switching from $\theta$ to $\theta'$. Improvement implies, by (b'),
that $\theta'$ must have been the minus site in stage $t$. Therefore,
the minus will move from $\theta'$ to $\theta$, and (a') will still
hold in stage $t+1$. Note that if the minus site is $\lambda$ in
some stage, then every cell-unit is best-responding and Phase II will
end on that stage.

Let $\pi_{i}$ be the cell of $\beta$. Any cell-unit of another cell
still obeys (b'), as it was not affected by $\beta$\textquoteright s
switch. Since the switch of $\beta$ from $\theta$ to $\theta'$
was, by definition of Phase II, a best cell-unit switch for $\pi_{i}$,
$\beta$ cannot improve again. Therefore, $\beta$ obeys (b'), since
$\beta$ is not in the minus site. Let $\gamma$ be another cell-unit
of $\pi_{i}$. Note first that $\gamma$ cannot improve by switching
to the current minus site $\theta$; otherwise, switching $\gamma$
to $\theta'$ earlier, on stage $t$, would have been a better switch
than the one chosen. 

We have seen that $\theta'\neq\lambda$. If also $\theta\neq\lambda$,
then the preference relation between sites other than $\theta$ and
$\theta'$ has not changed, and, therefore, $\gamma$ still obeys
(b'). Otherwise, $\theta=\lambda$. Then if $\gamma$ is placed in
$\lambda$ it still obeys (b'), because the attractiveness of $\lambda$
has increased, by the convex costs observation; and if $\gamma$ is
placed in some location then $\gamma$ still obeys (b'), because the
relation between locations other than $\theta'$ has not changed.

When Phase II ends, every cell-unit will be best-responding. To see
that Phase II must eventually end, we employ the same argument as
in Phase I, noting that right after some cell-unit $\beta$ switches
to location $\omega'$, $\omega'$ is not the minus site, and, therefore,
the expected reward of $\beta$ when placed in $\omega'$ can never
be lower than it is now.
\end{proof}
To prove weak acyclicity, start from any strategy profile. By applying
Lemma \ref{lem:cell-unit} inductively we obtain a sequence of cell-unit
improvements that lead to a profile under which one cell-unit is best-responding,
then two, and so on. Eventually we get a profile under which every
cell-unit is best-responding, hence an equilibrium.

\section{\label{subsec:Proof-of-Claim-prob-to-smooth}Proof of Prop. \ref{prop:equivalent-outcome-classes}
(3 Equivalent Classes of Outcomes) }

\global\long\def\Reals{\mathbb{{R}}}%

We already explained why (i)$\Rightarrow$(ii), right before Definition
\ref{def:compatibility}. To see that (ii)$\Rightarrow$(iii), suppose
$f$ is compatible. Denote the set of all cells of the search game
by $\hat{\Pi}=\{(i;\pi_{i}):i\in N,\pi_{i}\in\Pi_{i}\}$. We construct
a flow network, namely, a directed graph $D=(V,E)$ with vertices
$V$ and edges $E\subset V\times V$, and a flow capacity $\kappa(v_{{\scriptscriptstyle 1}},v_{{\scriptscriptstyle 2}})\geq0$
\,for every edge $(v_{{\scriptscriptstyle 1}},v_{{\scriptscriptstyle 2}})$
(illustrated beside the sketch of this proof, in Figure \ref{fig:min-cut-1}).
There are two special vertices, a source $s$ and a sink $t$. The
other vertices in our network are the locations $\Omega$ and the
cells $\hat{\Pi}$ of the game. There is an edge from $s$ to every
$(i;\pi_{i})\in\hat{\Pi}$, where $\kappa(s,(i;\pi_{i}))=K_{i}$,
and an edge from every location $\omega\in\Omega$ to $t$, where
$\kappa(\omega,t)=f(\omega)$. Also, there is an edge from a cell
$(i;\pi_{i})\in\hat{\Pi}$ to a location $\omega$ iff $\pi_{i}$
contains $\omega$, and the flow capacity $\kappa$ of such edges
is infinite (for a textbook presentation of flow networks; see, e.g.,
\citealp[Ch. 26]{cormen2009introduction}).

A cut of $D$ is a subset of edges $C\subset E$, such that if all
the edges of $C$ are removed then there exists no path between $s$
and $t$. Suppose that $C$ is a minimal cut, i.e., a cut whose sum
of capacities is minimal. Then $C$ certainly does not include any
edge between a cell and a location, as those edges have an infinite
flow capacity. Let $Q=\left\{ \omega\in\Omega:\left(\omega,t\right)\in C\right\} $
denote the locations that the cut separates from $t$. Denote $W=\Omega\setminus Q$.
Then $C$ must include all the edges $\{(s,(i;\pi_{i})):i\in N,\pi_{i}\cap W\neq\emptyset\}$;
otherwise there would still exist a path from $s$ to $t$. Hence,
the total capacity of $C$ equals 
\[
\sum_{\omega\in Q}\kappa\left(\omega,t\right)+\sum_{i\in N}\sum_{\pi_{i}\cap W\neq\emptyset}\kappa\left(s,\left(i;\pi_{i}\right)\right)=\sum_{\omega\in Q}f(\omega)+\sum_{i\in N}\sum_{\pi_{i}\cap W\neq\emptyset}K_{i}=
\]
\[
\sum_{\omega\in Q}f(\omega)+\sum_{i\in N}K_{i}\cdot\left|\left\{ \pi_{i}\in\Pi_{i}:\pi_{i}\cap W\neq\emptyset\right\} \right|\geq\sum_{\omega\in Q}f(\omega)+\sum_{\omega\in W}f(\omega)=\sum_{\omega\in\Omega}f(\omega).
\]
Therefore, the cut that consists of all edges of type $(\omega,t)$,
whose total capacity equals $\sum_{\omega\in\Omega}f(\omega)$, is
minimal.

A flow in $D$ is a function $\varphi:E\to\Reals^{+}$ such that:
(i) the flow never exceeds the capacity, i.e., $\varphi(e)\leq\kappa(e)$,
and (ii) the overall flow outgoing from $s$, namely, the sum of flows
on edges outgoing from $s$, equals the overall flow incoming to $t$,
namely, the sum of flows on edges incoming to $t$ (call this quantity
the value of the flow), and for any other vertex the incoming flow
equals the outgoing flow. The max-flow min-cut theorem (\citealp[p. 723, Theorem 26.6]{cormen2009introduction})
states that the value of the maximal flow equals the total capacity
of the minimal cut; therefore, $D$ admits a flow $\varphi$ of value
$\sum_{\omega\in\Omega}f(\omega)$, and so it must be the case that
$\varphi(\omega,t)=f(\omega)$ for every $\omega\in\Omega$. 

Now define a fractional allocation $\alpha$ by letting ${\color{brown}\alpha}_{i}(\pi_{i},\omega)=\varphi((i;\pi_{i}),\omega)$
for every $i\in N$,$\pi_{i}\in\Pi_{i}$, and $\omega\in\pi_{i}$.
To see that this is a fractional allocation we verify that for any
$\pi_{i}$,\, $\sum_{\omega\in\pi_{i}}{\color{brown}\alpha}_{i}\left(\pi_{i},\omega\right)=\sum_{\omega\in\pi_{i}}\varphi((i;\pi_{i}),\omega)=\varphi(s,(i;\pi_{i}))\leq\kappa(s,(i;\pi_{i}))=K_{i}$
(where the second equality is due to the equality of the outgoing
and the incoming flow). To see that $\alpha$ generates $f$, we verify
that for any $\omega$, it is the case that $\sum_{i\in N}\,{\color{brown}\alpha}_{i}\left(\pi_{i}(\omega),\omega\right)=\sum_{i\in N}\,\varphi\left((i;\pi_{i}(\omega)),\omega\right)=\varphi(\omega,t)=f(\omega)$.

Now we show that (iii)$\Rightarrow$(i). A nonnegative matrix $\boldsymbol{A}$
is doubly stochastic (resp., doubly substochastic) if the sum of the
elements in each row and in each column is equal to (resp., at most)
one, i.e., if $\sum_{j}A_{ij}=1$ (resp., $\sum_{j}A_{ij}\leq1$)
for each row $i$ and $\sum_{i}A_{ij}=1$ (resp., $\sum_{i}A_{ij}\leq1$)
for each column $j$. Note that any doubly stochastic matrix must
be a square matrix (but this is not the case for a doubly substochastic
matrix). A doubly stochastic (resp., doubly substochastic) matrix
is a permutation (resp., subpermutation) matrix if it includes only
zeros and ones, i.e., if $A_{ij}\in\left\{ 0,1\right\} $ for any
$i,j$. Note that a permutation (resp., subpermutation) matrix includes
exactly (resp., at most) one non-zero value in each row and in each
column, and this value is equal to one. The Birkhoff--von Neumann
theorem states that any doubly stochastic matrix can be written as
a convex combination of permutation matrices. Formally:
\begin{thm}[Birkhoff--von Neumann Theorem]
\label{thm:-Birkhoff} Let $\boldsymbol{A}$ be a doubly stochastic
matrix. Then there exists a finite set of permutation matrices $\boldsymbol{P}^{1}$,
...,$\boldsymbol{P}^{K}$ such that $\boldsymbol{A}=\sum_{k}w_{k}\cdot\boldsymbol{P}^{k}$,
where $w_{k}\geq0$ for each $k$ and $\sum_{k}w_{k}=1$.
\end{thm}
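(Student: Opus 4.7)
The plan is to prove this classical result by induction on the number of positive entries of $\boldsymbol{A}$, via an iterative extraction procedure. At each step I would find a permutation matrix whose support lies in the support of the current residual matrix, subtract an appropriate scalar multiple of it, and recurse on what remains.

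The key step is showing that every $n\times n$ doubly stochastic matrix $\boldsymbol{A}$ contains a \emph{permutation pattern}, i.e., there exists a permutation $\sigma$ of $\{1,\ldots,n\}$ such that $A_{i,\sigma(i)}>0$ for every $i$. I would establish this via Hall's marriage theorem, applied to the bipartite graph whose two parts are the rows and the columns of $\boldsymbol{A}$, with row $i$ adjacent to column $j$ iff $A_{ij}>0$. To verify Hall's condition, fix any subset $S$ of rows and let $N(S)$ be the set of columns adjacent to some row in $S$. The total mass in rows $S$ equals $|S|$ since each row sums to $1$, and this mass resides entirely in columns of $N(S)$. Since each column sums to $1$, the total mass available in those columns is at most $|N(S)|$, so $|N(S)|\geq |S|$. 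Hence Hall's condition holds and a perfect matching exists, yielding the desired $\sigma$.

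Given such $\sigma$, let $\boldsymbol{P}^{\sigma}$ be the corresponding permutation matrix and set $c=\min_{i}A_{i,\sigma(i)}>0$. If $c=1$ then $\boldsymbol{A}=\boldsymbol{P}^{\sigma}$ and we are done. Otherwise, define
\[
\boldsymbol{A}' \;=\; \frac{\boldsymbol{A}-c\,\boldsymbol{P}^{\sigma}}{1-c}.
\]
By the choice of $c$, $\boldsymbol{A}'$ is entrywise nonnegative, and every row and column sum equals $(1-c)/(1-c)=1$, so $\boldsymbol{A}'$ is again doubly stochastic. Moreover $\boldsymbol{A}'$ has strictly fewer positive entries than $\boldsymbol{A}$, because the entry $A_{i^{*},\sigma(i^{*})}$ achieving the minimum is sent to $0$. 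Applying the induction hypothesis, $\boldsymbol{A}'=\sum_{k\geq 2} w'_{k}\,\boldsymbol{P}^{k}$ with $w'_{k}\geq 0$ and $\sum_{k\geq 2}w'_{k}=1$. Setting $w_{1}=c$, $\boldsymbol{P}^{1}=\boldsymbol{P}^{\sigma}$, and $w_{k}=(1-c)\,w'_{k}$ for $k\geq 2$ yields the required convex combination.

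The main obstacle is really the extraction of $\sigma$ from double stochasticity; once Hall's condition is verified, everything else proceeds mechanically. Termination is immediate, since the number of positive entries is a nonnegative integer bounded by $n^{2}$ that strictly decreases at each step, so the recursion ends after finitely many rounds and produces the finite set $\{\boldsymbol{P}^{1},\ldots,\boldsymbol{P}^{K}\}$ in the statement.
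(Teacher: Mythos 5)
Your proof is correct: the Hall's-theorem verification of the marriage condition (mass $|S|$ in the rows of $S$ must fit into columns of total mass at most $|N(S)|$), the extraction of the permutation pattern at weight $c=\min_i A_{i,\sigma(i)}$, the renormalization to a doubly stochastic residual, and the induction on the number of positive entries are all sound; this is the standard textbook proof of Birkhoff--von Neumann. Note, however, that the paper does not prove this statement at all: it invokes it as a classical result (citing Birkhoff 1946 and von Neumann 1953) and only supplies a proof for the piece it actually needs beyond the classical theorem, namely Lemma \ref{lem:extended-Birkhoff}, which extends the decomposition to doubly \emph{substochastic} matrices by embedding an $I\times J$ substochastic matrix $\boldsymbol{A}$ into an $(I+J)\times(I+J)$ doubly stochastic matrix and projecting the resulting permutation matrices back down to subpermutation matrices. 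So your argument fills in a step the paper deliberately treats as a black box; if you wanted to match the paper's actual proof burden, the more relevant exercise would be the substochastic extension (which, incidentally, your subtraction argument also handles directly if one replaces Hall's theorem with a maximum-matching argument covering all rows and columns with sum equal to one).
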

We present a simple extension of Thm. \ref{thm:-Birkhoff} \,that
states that any doubly substochastic matrix can be written as a convex
combination of subpermutation matrices.\footnote{One can show that Lemma \ref{lem:extended-Birkhoff} is implied by
the extension of the Birkhoff--von Neumann Theorem presented in \citet{budish2013designing}.
For completeness, we provide a self-contained proof of the lemma.} 
\begin{lem}
\label{lem:extended-Birkhoff}Let $\boldsymbol{A}$ be a doubly substochastic
matrix. Then there exists a finite set of subpermutation matrices
$\boldsymbol{Q}^{1}$, ...,$\boldsymbol{Q}^{K}$ s.t. $\boldsymbol{A}=\sum_{k}w_{k}\cdot\boldsymbol{Q}^{k}$,
where $w_{k}\geq0$ for each $k$ and $\sum_{k}w_{k}=1$.
\end{lem}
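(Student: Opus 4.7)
My plan is to reduce Lemma \ref{lem:extended-Birkhoff} to the Birkhoff--von Neumann theorem (Theorem \ref{thm:-Birkhoff}) by embedding the given doubly substochastic matrix as a block of a larger doubly stochastic square matrix, and then reading off the desired decomposition from the block.

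Concretely, suppose $\boldsymbol{A}$ is an $m \times n$ doubly substochastic matrix. I would define row slacks $r_i = 1 - \sum_j A_{ij} \geq 0$ and column slacks $c_j = 1 - \sum_i A_{ij} \geq 0$, and construct the $(m+n) \times (m+n)$ matrix
\[
\boldsymbol{A}' = \begin{pmatrix} \boldsymbol{A} & \boldsymbol{D}_r \\ \boldsymbol{D}_c & \boldsymbol{A}^{T} \end{pmatrix},
\]
where $\boldsymbol{D}_r = \mathrm{diag}(r_1,\ldots,r_m)$ and $\boldsymbol{D}_c = \mathrm{diag}(c_1,\ldots,c_n)$. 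A direct check shows that every row and every column of $\boldsymbol{A}'$ sums to $1$, so $\boldsymbol{A}'$ is doubly stochastic. Applying Theorem \ref{thm:-Birkhoff}, I get $\boldsymbol{A}' = \sum_{k} w_k \cdot \boldsymbol{P}^{k}$, a convex combination of $(m+n) \times (m+n)$ permutation matrices. Restricting each $\boldsymbol{P}^{k}$ to its top-left $m \times n$ block yields a matrix $\boldsymbol{Q}^{k}$ with at most one $1$ in every row and in every column (since this is inherited from the permutation structure of $\boldsymbol{P}^{k}$), hence a subpermutation matrix. Taking the same convex combination on each block then gives $\boldsymbol{A} = \sum_{k} w_k \cdot \boldsymbol{Q}^{k}$, as desired.

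The only nontrivial step is choosing the right completion so that $\boldsymbol{A}'$ is simultaneously square, nonnegative, and doubly stochastic while $\boldsymbol{A}$ appears as a block. Placing the row slacks in $\boldsymbol{D}_r$ and the column slacks in $\boldsymbol{D}_c$ takes care of the first-row and first-column sums, but to keep the bottom-right block's row and column sums equal to $1$ I need it to be a doubly substochastic matrix whose row sums match the column slacks of $\boldsymbol{A}$ and whose column sums match the row slacks of $\boldsymbol{A}$; putting $\boldsymbol{A}^{T}$ there achieves both simultaneously, which is why this choice is natural. I expect verifying the row/column sums and the extraction of subpermutation blocks to be essentially a bookkeeping exercise once this construction is in place, so no further obstacles arise.
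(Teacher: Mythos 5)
Your proposal is correct and is essentially identical to the paper's own proof: the paper completes $\boldsymbol{A}$ to a doubly stochastic square matrix using exactly the same four blocks ($\boldsymbol{A}$, the diagonal matrices of row and column slacks, and $\boldsymbol{A}^{T}$), applies the Birkhoff--von Neumann theorem, and extracts the top-left block of each permutation matrix. No differences worth noting.
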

\begin{proof}
Let $I$ (resp., $J$) be the number of rows (resp., columns) in the
matrix $\boldsymbol{A}$. We construct a square doubly stochastic
matrix $\boldsymbol{B}$ with $I+J$ rows and columns by merging 4
submatrices (as illustrated in Figure \ref{fig:Illustration-Birkhoff}):
(1) the matrix $\boldsymbol{A}$ (with $I$ rows and $J$ columns)
in the top-left part of $\boldsymbol{B}$, (2) a $J\times J$ diagonal
matrix in the bottom-left part of $\boldsymbol{B}$, where each diagonal
cell completes the values in each column of $\boldsymbol{A}$ to one,
(3) an $I\times I$ diagonal matrix in the top-right part of $\boldsymbol{B}$,
where each diagonal cell completes the values in each row of $\boldsymbol{A}$
to one, and (4) the $J\times I$ matrix $\boldsymbol{A}^{T}$ (the
transpose of $\boldsymbol{A}$) in the bottom-right part of $\boldsymbol{B}$.
It is immediate that $\boldsymbol{B}$ is a doubly stochastic matrix.
By Theorem \ref{thm:-Birkhoff} there exists a finite set of permutation
matrices $\boldsymbol{P}^{1}$, ...,$\boldsymbol{P}^{K}$ (with $I+J$
rows and columns) such that $\boldsymbol{B}=\sum_{k}w_{k}\cdot\boldsymbol{P}^{k}$,
where $w_{k}\geq0$ for each $k$ and $\sum_{k}w_{k}=1$. Let $\boldsymbol{Q}^{k}$
be a submatrix of $\boldsymbol{P}^{k}$ with the first $I$ rows and
$J$ columns. Then it is immediate that each $\boldsymbol{Q}^{k}$
is a subpermutation matrix and that $\boldsymbol{A}=\sum_{k}w_{k}\cdot\boldsymbol{Q}^{k}$.
\begin{figure}[h]
\caption{\label{fig:Illustration-Birkhoff}Illustration of How to Construct
the Square Matrix $\boldsymbol{B}$}

\medskip{}

\centering{}\includegraphics[scale=0.35]{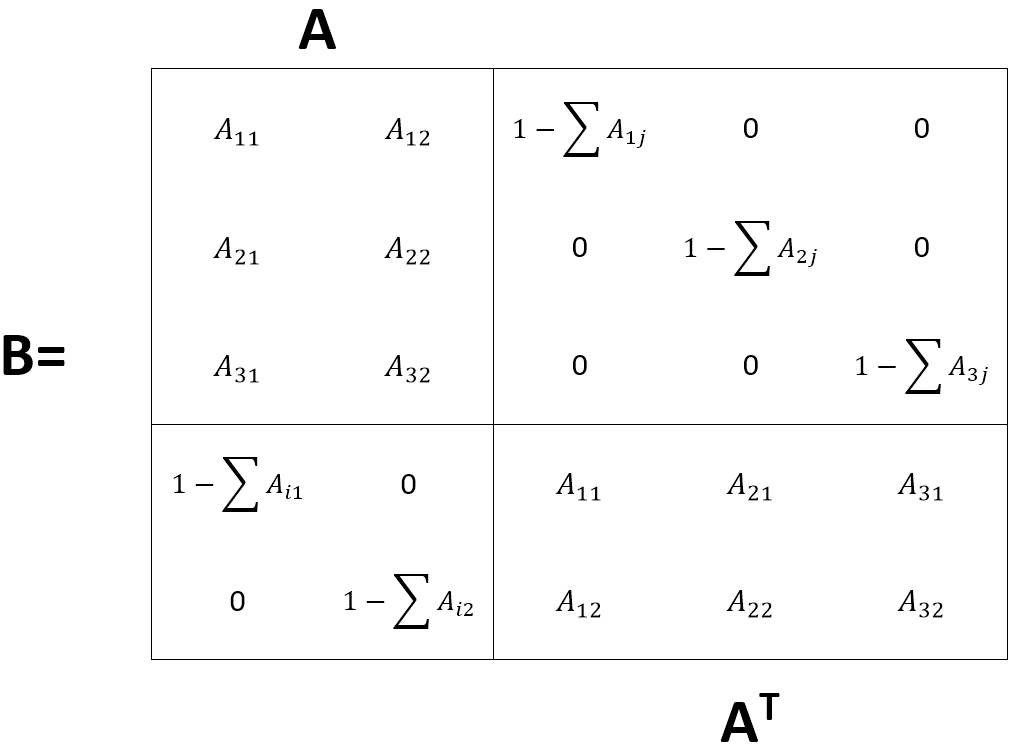}
\end{figure}
\end{proof}
Next we rely on Lemma \ref{lem:extended-Birkhoff} to prove that (iii)$\Rightarrow$(i).
Suppose $f$ is generated by the fractional allocation $\alpha$.
Similarly to the proof of Proposition \ref{pro:sequence-unilateral-improvements},
we define a cell-unit as a tuple $\left(i,j,\pi_{i}\right)$, where
$i\in N$ is a player, $j\in\left\{ 1,...,K_{i}\right\} $ is an index
corresponding to one unit of capacity of player $i$, and $\pi_{i}\in\Pi_{i}$
is a cell of player $i$. Let $\hat{\Pi}$ denote the set of all cell-units
with a typical element $\hat{\pi}$, let $\hat{\Pi}_{i}$ denote the
subset of cell-units that correspond to player $i$, and let $\hat{\Pi}_{i,j}$
denote the subset of cell-units that correspond to capacity unit $j\in\left\{ 1,...,K_{i}\right\} $
of player $i$. We write $\omega\in\hat{\pi}=\left(i,j,\pi\right)$
if $\omega\in\pi$.

A fractional division (of the cell-units) $\tau$ allocates, for each
cell-unit $\hat{\pi}$, a capacity of one between its locations, i.e.,
it specifies a nonnegative number $\tau\left(\hat{\pi},\omega\right)$
for every location $\omega\in\hat{\pi}$, such that $\sum_{\omega\in\hat{\pi}}\,\tau\left(\hat{\pi},\omega\right)\leq1$.
The fractional allocation $\alpha$ can be represented as an equivalent
fractional division (of the cell-units) $\tau$ that satisfies $\sum_{j=1}^{K_{i}}{\color{brown}\tau}\left(\left(i,j,\pi_{i}\right),\omega\right)={\color{brown}\alpha}_{i}\left(\pi_{i},\omega\right)$
for each $\pi_{i}\in\Pi_{i}$ and $\omega\in\Omega$. The equivalent
fractional division $\tau$ can be represented as a $\left|\hat{\Pi}\right|\times\left|\Omega\right|$
nonnegative matrix $\boldsymbol{C}$ as follows:

\[
C_{\left(i,j,\pi_{i}\right),\omega}=\begin{cases}
{\color{brown}\tau}\left(\left(i,j,\pi_{i}\right),\omega\right) & \omega\in\pi_{i}\in\Pi_{i}\\
0 & \textrm{otherwise}.
\end{cases}
\]
Observe that the sum of each row in $\boldsymbol{C}$ is at most one,
i.e., $\sum_{\omega\in\Omega}C_{\hat{\pi},\omega}\leq1$, but the
sum of a column might be greater than one. Let $\boldsymbol{A}$ be
the matrix derived from $\boldsymbol{C}$ by decreasing the values
of the lower cells within columns whose sum is greater than one, such
that the sum of each column is at most one. Formally (where we write
$\hat{\pi}'<\hat{\pi}$ if the row of $\hat{\pi}'$ is higher than
the row of $\hat{\pi}$ in the matrix $\boldsymbol{C}$): 
\[
A_{\hat{\pi},\omega}=\begin{cases}
C_{\hat{\pi},\omega} & \sum_{\hat{\pi}'\leq\hat{\pi}}C_{\hat{\pi}',\omega}\leq1\\
1-\sum_{\hat{\pi}'<\hat{\pi}}C_{\hat{\pi}',\omega} & \sum_{\hat{\pi}'<\hat{\pi}}C_{\hat{\pi}',\omega}\leq1<\sum_{\hat{\pi}'\leq\hat{\pi}}C_{\hat{\pi}',\omega}\\
0 & \sum_{\hat{\pi}'<\hat{\pi}}C_{\hat{\pi}',\omega}>1.
\end{cases}
\]
Observe that $\boldsymbol{A}$ is a doubly substochastic matrix (i.e.,
the sum of each row and of each column is at most one), and that the
fractional division corresponding to $\boldsymbol{A}$ generates the
same mixed outcome as $\alpha$. By Lemma \ref{lem:extended-Birkhoff},
there exists a finite set of subpermutation matrices $\boldsymbol{Q}^{1},\ldots,\boldsymbol{Q}^{K}$
such that $\boldsymbol{A}=\sum_{k}w_{k}\cdot\boldsymbol{Q}^{k}$,
where $w_{k}\geq0$ for each $k$ and $\sum_{k}w_{k}=1$. Further
observe that each subpermutation matrix $\boldsymbol{Q}^{k}$ corresponds
to the cell-unit representation of a pure strategy profile $s^{k}$,
which implies that ${\color{brown}\alpha}$ generates the same mixed
outcome as the lottery over strategy profiles $\sigma=\sum_{k}w_{k}\cdot s^{k}$.

\section{\label{subsec:Proof-of-Prop-redundanvy-free-optimal}Proof of Proposition
\ref{prop:cost-inclusive-redundancy-free} (Cost-inclusive Variant)}

The game admits a redundancy-free strategy profile. Let $\sOne$ be
some redundancy-free profile. Let $\sTwo$ be a cost-inclusive socially
optimal profile, and let $W\subseteq\Omega$ denote the locations
that are searched (by anyone) under $\sTwo$. First, we claim that
there exists a redundancy-free profile $s$ such that every location
in $W$ is searched under $s$.

\global\long\def\mOne{m^{{\scriptscriptstyle 1}}}%
\global\long\def\mTwo{m^{{\scriptscriptstyle 2}}}%

To prove this claim, consider a bipartite graph whose left side is
the set $A$ that consists of $K_{i}$ copies\footnote{The members of $A$ correspond to the cell-units defined in the proof
of Proposition \ref{pro:sequence-unilateral-improvements}.} of each cell of each player $i$, and whose right side is the set
of locations $\Omega$. Two nodes $a\in A$ and $\omega\in\Omega$
(i.e, a copy of a cell and a location) are connected by an edge iff
the cell contains that location. For every cell of player i, a strategy
$s_{i}$ corresponds to a list of pairs of nodes $\left(a,\omega\right)$,
where $a$ is a copy of that cell and $\omega$ is chosen by $s_{i}$.
The strategy profile $\sOne$ avoids search duplication and, therefore,
it corresponds to a matching $\mOne$ in the graph, namely, a list
of pairs of nodes $\left(a,\omega\right)$ such that each pair is
connected and no node appears twice. Moreover, $\sOne$ employs every
unit of capacity and, therefore, the corresponding matching $\mOne$
fully matches $A$; i.e., it matches every node of $A$. Similarly,
the strategy profile $\sTwo$ corresponds to a matching $\mTwo$ that
fully matches the subset of nodes $W\subseteq\Omega$, where for any
location that is searched by more than one player under $\sTwo$,
we arbitrarily pick one of these players and discard the others.\footnote{Alternatively, we can assume w.l.o.g. that $\sTwo$ involves no search
duplications, as $\sTwo$ is socially optimal.}

Thus, in our bipartite graph there is one matching that fully matches
the left side $A$, and another matching that fully matches a subset
$W\subseteq\Omega$ of the right side. We now claim that this implies
the existence of a single matching that fully matches both $A$ and
$W$. The proof is by induction on the size of $W$; note that if
$W$ is empty then the claim is trivial (this is the induction's base). 

Suppose that $\mOne$ does not fully match $W$ (otherwise we are
done). Then, let $w\in W$ be some node that is left unmatched by
$\mOne$, and let $a\in A$ be its match according to $\mTwo$. Let
us remove the nodes $a$ and $w$ from the graph, and look at the
residual graph whose left side is $A'=A\setminus\left\{ a\right\} $
and whose right side is $\Omega\setminus\left\{ w\right\} $. The
restriction of $\mTwo$ to the residual graph fully matches the set
$W'=W\setminus\left\{ w\right\} $. Also, the restriction of $\mOne$
to the residual graph fully matches the set $A'$, as the removal
of $w$ does not affect $\mOne$. By applying the induction hypothesis
to the residual graph, there exists a matching that fully matches
both $A'$ and $W'$. This matching, together with the matching of
$a$ and $w$, gives us a matching on the whole graph, which fully
matches both $A$ and $W$. 

Getting back to out setting, our claim about a matching on the graph
translates to the existence of a strategy profile $s$ that is both
redundancy-free and searches all the locations in $W$. Let us show
that this $s$ is socially optimal. Let $k$ be the number of idle
units of capacity under $\sTwo$. Under $s$, every unit of capacity
is employed and therefore the additional cost, compared to $\sTwo$,
is the sum of the marginal costs of these $k$ units. On the other
hand, since $s$ is redundancy-free there are at least $k$ additional
locations, besides $W$, that are searched under $s$. Our assumption
that social value outweighs cost implies that the expected social
gain by the additional locations exceeds the additional cost. Therefore,
the social payoff under $s$ is at least as much as that under $\sTwo$,
implying that $s$ is socially optimal.

By Proposition \ref{pro:sequence-unilateral-improvements}, there
exists an improvement path, starting from $s,$ where in each stage
some player improves her payoff by switching a single choice in a
single cell, and ending in an equilibrium. Since $s$ is redundancy-free,
the whole path also consists of redundancy-free strategy profiles,
because neither switching to searching an occupied location nor switching
to being idle can be an improvement, by the solitary-search dominance
assumption. By ordinal consistency, whenever a player improves her
payoff by switching from one (unoccupied except by her) location to
another unoccupied location, the social payoff does not decrease.
Therefore, the equilibrium reached at the end of the path is still
socially optimal.

\bibliographystyle{chicago}
\bibliography{diffG_Refs}

\end{document}